\newtheorem{theorem}{Theorem}[section]
\newtheorem{proposition}[theorem]{Proposition}
\newtheorem{corollary}[theorem]{Corollary}
\theoremstyle{definition}
\newtheorem{defn}[theorem]{Definition}
\newtheorem{example}[theorem]{Example}
\newcommand{\F}{\mathcal{F}}
\newcommand{\A}{\mathcal{A}}
\newcommand{\B}{\mathcal{B}}
\newcommand{\M}{\mathcal{M}}
\renewcommand{\H}{\mathcal{H}}
\newcommand{\K}{\mathcal K}      
\newcommand{\T}{\mathcal T}                    
\newcommand{\E}{\mathbb{E}}
\newcommand{\N}{\mathbb{N}}
\newcommand{\R}{\mathbb{R}}
\newcommand{\C}{\mathbb{C}}
\newcommand{\QE}[2]{\E_{#1}\left[ {#2}\right]}
\newcommand{\tr}{ \operatorname{Tr} }
\newcommand{\dd}{\mathrm{d}}
\renewcommand{\d}{\, \mathrm{d}}
\newcommand{\lvar}[2]{{{\rm Var}_\ell}_{#1}\left[ {#2}\right]} 
\newcommand{\rvar}[2]{{{\rm Var}_r}_{#1}\left[ {#2}\right]} 
\newcommand{\var}[2]{{\rm Var}_{#1}\left[ {#2}\right]}
\newcommand{\define}{\emph}
\newcommand{\Prob}[1]{\mathbf{P}\left\{{#1}\right\}}
\renewcommand{\Pr}{\mathbf{P}}
\newcommand{\essran}{\operatorname{ess-ran}}
\newcommand\cstar{{\rm C}^*}       
\newcommand\cstarconv{ {\rm C}^*{\rm conv}}
\newcommand\spec{\operatorname{Spec}}
\begin{document}

\title[Spectra and Variance of Quantum Random Variables]{Spectra and Variance of Quantum Random Variables}

\author{Douglas Farenick and Michael J.~Kozdron}
\address{Department of Mathematics and Statistics, University of Regina, Regina, Saskatchewan S4S 0A2, Canada}
\email{douglas.farenick@uregina.ca}
\email{kozdron@stat.math.uregina.ca}

\author{Sarah Plosker}
\address{Department of Mathematics and Computer Science, Brandon University, Brandon, Manitoba 
R7A 6A9, Canada} 
\email{ploskers@brandonu.ca}

\begin{abstract} 
We study essentially bounded quantum random variables and show that
the Gelfand spectrum of such a quantum random variable $\psi$ coincides with the hypoconvex hull of the
essential range of $\psi$. Moreover, a notion of operator-valued variance is introduced,
leading to a formulation of the moment problem in the context of quantum probability spaces in terms of operator-theoretic properties involving
semi-invariant subspaces and spectral theory. As an application of quantum variance, 
new measures of random and inherent quantum noise are introduced for measurements of quantum systems,
modifying some recent ideas of Polterovich \cite{polterovich2014}.
\end{abstract}

\keywords{positive operator-valued measure; spectrum; essential range;  quantum probability measure; quantum random variable; variance; quantum noise; inherent quantum noise; smearing; quantum randomisation}

\subjclass[2010]{46G10; 28B05; 47A10; 81P15}

\maketitle

\section{Introduction}
\label{Introduction}

Some of the most basic and useful properties of classical random variables are altered when passing from
real- or complex-valued measurable functions to operator-valued measurable functions (that is, from classical to 
quantum random variables).  In earlier works~\cite{farenick--kozdron2012,farenick--plosker--smith2011,kylerthesis}, 
a certain operator-valued formulation of the notion of expectation of a quantum random variable was considered. 
In the present paper, we consider a similar formulation
for the variance of a quantum random variable. As in these earlier investigations, the noncommutativity of operator
algebra will lead to some structure that simply does not appear in the classical setting.

It is a basic fact of functional analysis that the essential range of an essentially bounded random variable 
coincides with the spectrum of a certain element in an abelian von Neumann algebra. Specifically, if 
$\psi:X\rightarrow\C$ is an essentially bounded function on a probability space $(X,\F(X), \mu)$, then the
essential range of $\psi$ is precisely the spectrum of $\psi$, where one considers $\psi$ as an element of the
von Neumann algebra $L^\infty(X,\F(X),\mu)$. We will arrive at a similar result for essentially bounded quantum random
variables on quantum probability spaces using higher dimensional spectra. However, it will turn out that our investigation 
of quantum variance will also involve notions from spectral theory. In particular, the quantum moment problem admits
a characterisation entirely within spectral terms.

As an application of our operator-valued variance, we consider some recent work of Polterovich~\cite{polterovich2014}
on random and inherent quantum noise in which the variance has a role. In Polterovich's work, a somewhat hybrid context
is at play: while the measures are operator-valued, the random variables are classical. In modifying Polterovich's ideas to
account for operator-valued measures and operator-valued random variables, we formulate new 
measures of quantum noise. One
of the main consequences of our results in this direction is that if an experimental apparatus is free of random
quantum noise, then it is classical, not quantum mechanical. Our work on quantum noise involves another idea 
that may be of value in other settings, namely that of quantum randomisation (or smearing), which is in contrast to 
the hybrid notion of smearing studied in early works such as~\cite{qm-book,jencova--pulmannov2009}. By way of quantum 
randomisation, 
we also modify another concept of Polterovich to obtain a measure of the 
intrinsic quantum noise of the apparatus represented by $\nu$.

If $(X,\F(X))$ denotes an arbitrary measurable space, and if $M$ is a von Neumann algebra with predual $M_*$ and positive cone $M_+$, then a 
function $\nu:\F(X)\rightarrow M$ is a positive operator-valued measure (POVM) if
\begin{enumerate}
\item $\nu(E) \in M_+$ for every $E \in \F(X)$,
\item $\nu(X) \neq 0$, and
\item $\omega\circ\nu:\F(X)\rightarrow\mathbb C$ is a complex measure for every $\omega\in M_*$.
\end{enumerate}
Note that the third condition above asserts that, 
for every countable collection $\{E_k\}_{k \in \N} \subseteq \F(X)$ with $E_j \cap E_k = \emptyset$ for $j \neq k$,
\begin{equation}\label{sum}
\nu\left(\bigcup_{k\in \N} E_k \right) = \sum_{k \in \N}\nu(E_k),
\end{equation}
where the convergence is with respect to the ultraweak topology of $M$.

If a POVM $\nu$ also satisfies $\nu(E\cap F)=\nu(E)\nu(F)$ for all $E,F\in\F(X)$, then $\nu$ is called a projective POVM.
An important theorem of M.A.~Naimark~\cite{naimark1943},~\cite[Theorem 4.6]{Paulsen-book} states that every POVM
admits a dilation to a projective POVM. Lastly, if a POVM $\nu$ has the property that $\nu(X)=1$, the identity element of $M$, then
$\nu$ is called a quantum probability measure.

A function $\psi:X\rightarrow M$ is said to be measurable if the complex-valued function $\omega\circ f$ on $X$
is measurable for every $\omega\in M_*$. Furthermore, if $\nu$ is a quantum probability measure, then a measurable 
function $\psi:X\rightarrow M$ is called a quantum random variable.

Suppose that $\omega\in M_*$ is a faithful state on $M$ and that $\nu$ is a quantum probability measure.  Then $\omega\circ\nu$ is a (classical)
probability measure and, because $\omega$ is faithful, $\nu$ and $\omega\circ\nu$ are mutually absolutely continuous.
The predual of the von Neumann algebra 
$L^\infty(X,\omega\circ\nu)\overline\otimes M$ is given by $L^1_{M_*}(X,\omega\circ\nu)$~\cite[Theorem~IV.7.17]{Takesaki-bookI}.
By way of this duality isomorphism, if
$\Psi\in L^\infty(X,\omega\circ\nu)\overline\otimes M$, then there is a bounded measurable function
$\psi:X\rightarrow M$ such that, for each $f \in L^1_{M_*}(X,\omega\circ\nu)$, the complex number
$\Psi(f)$ is given by
\begin{equation*}
\Psi(f)=\int_X \omega\left(f(x)\psi(x)\right) \d (\omega\circ\nu)(x).
\end{equation*}
Although $\psi$ is not unique, it is unique up to a set of $\omega\circ\nu$-measure zero.
We therefore identify $\Psi$ and $\psi$ and consider the elements 
of $L^\infty(X,\omega\circ\nu)\overline\otimes M$, in the case where $\nu$ is a quantum probability measure, 
to be bounded quantum random variables $\psi:X\rightarrow M$.

The general context described above for operator-valued measures and functions is considered in this paper 
only in the setting a finite factor $M$ of type ${\rm I}_d$; that is, $M=\B(\H)$ for some $d$-dimensional Hilbert space $\H$
and $d\in\mathbb N$. The predual $M_*$ is denoted by $\T(\H)$ (the Banach space of trace-class operators on $\H$).
Owing to the finite-dimensionality of $\H$, the Banach spaces $\B(\H)$ and $\T(\H)$ are equal as sets, but as Banach spaces
any one 
of these spaces is isometrically isomorphic to the dual of the other.
In this setting, the faithful normal state $\omega\in M_*$ is chosen to be the normalised trace and, for a fixed quantum
probability measure $\nu$, we denote by $\mu$ the classical probability measure 
$\mu=\frac{1}{d}\tr\circ \nu$, where $\tr$ is the canonical trace on $\B(\H)$. 
Because $\H$ has finite dimension $d$, 
we adopt the following notation:
\[
L_\H^\infty(X,\nu) = L^\infty(X,\mu)\overline\otimes\B(\H) \cong L^\infty(X,\mu)\otimes M_d(\C),
\]
where $M_d(\C)$ is the space of $d\times d$ matrices over $\C$.

(The restriction to factors of type ${\rm I}_d$ is made for two reasons. The first reason is that  
the notion of quantum measurement most
often in practice entails a POVM $\nu$ with values in $\B(\H)$ for a finite-dimensional Hilbert space $\H$.
The second reason is that certain results, when formulated for infinite-dimensional factors, become far less interesting than 
is the case with finite-dimensional factors. As an example of this particular situation, 
compare  \cite[Theorem 5.1]{farenick--plosker--smith2011}
on the affine structure of the set of all quantum probability measures (type ${\rm I}_d$ case) with the analogous result in \cite{Holevo-book}  (type ${\rm I}_\infty$ case). 
Specifically, in the type
${\rm I}_d$ case extremal quantum probability measures are certain linear combinations of point-mass measures, whereas in the type ${\rm I}_\infty$ case the set of projective
quantum probability measures is dense (with need of forming convex combinations of such projective measures) in the space of all quantum probability measures.)

Lastly, all homomorphisms and isomorphisms of C$^*$-algebras are assumed, without saying so 
each time, to be unital and $*$-preserving. If $Z$ is a compact Hausdorff space, then
$C(Z)$ is the unital abelian C$^*$-algebra of all continuous functions 
$f:Z\rightarrow\C$.

\section{Basic Properties of Measurability and Quantum Expectation}

Some elementary but useful facts concerning measurable functions are noted in this section.

\begin{theorem}\label{msblqrv} The following two statements are equivalent for a function
$\psi:X\rightarrow\B(\H)$.
\begin{enumerate}
\item $\psi$ is  measurable.
\item $\psi^{-1}(U)$ is a measurable set, for every open set $U\subseteq\B(\H)$.
\end{enumerate}
\end{theorem}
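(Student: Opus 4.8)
The plan is to exploit the finite-dimensionality of $\B(\H)$, which forces $M_*=\T(\H)=\B(\H)^*$ and makes $\B(\H)$ a separable (indeed second countable) normed linear space; the equivalence then reduces to the standard fact that a map into a finite-dimensional normed space is Borel measurable precisely when its coordinate functions, relative to any fixed basis, are measurable.

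The implication $(2)\Rightarrow(1)$ is immediate. For $\omega\in\T(\H)$, the functional $\omega$ is continuous, so $\omega^{-1}(U)$ is open in $\B(\H)$ for every open $U\subseteq\C$; since the Borel $\sigma$-algebra of $\C$ is generated by its open sets and $(\omega\circ\psi)^{-1}(U)=\psi^{-1}(\omega^{-1}(U))\in\F(X)$ whenever $\psi^{-1}$ carries open sets into $\F(X)$, the function $\omega\circ\psi$ is measurable.

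For $(1)\Rightarrow(2)$, fix a linear basis $\omega_1,\dots,\omega_n$ of $\B(\H)^*$, where $n=d^2$, together with its dual basis $B_1,\dots,B_n$ in $\B(\H)$, so that $\psi(x)=\sum_{k=1}^n(\omega_k\circ\psi)(x)\,B_k$ for every $x\in X$. By hypothesis each scalar function $\psi_k:=\omega_k\circ\psi:X\to\C$ is measurable, and hence the map $\Phi:X\to\C^n$ defined by $\Phi(x)=(\psi_1(x),\dots,\psi_n(x))$ is measurable: the preimage under $\Phi$ of an open box $U_1\times\cdots\times U_n$ is the finite intersection $\bigcap_{k}\psi_k^{-1}(U_k)\in\F(X)$, and every open subset of $\C^n$ is a countable union of such boxes. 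Writing $T:\C^n\to\B(\H)$ for the linear isomorphism $T(z_1,\dots,z_n)=\sum_{k} z_k B_k$, which is a homeomorphism because all norms on a finite-dimensional space are equivalent, one has $\psi=T\circ\Phi$. Thus for any open $U\subseteq\B(\H)$ the set $T^{-1}(U)$ is open in $\C^n$ and $\psi^{-1}(U)=\Phi^{-1}\bigl(T^{-1}(U)\bigr)\in\F(X)$, which is precisely statement $(2)$.

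The only point needing care is the observation that testing measurability of $\omega\circ\psi$ against a finite spanning set of functionals already recovers all coordinate information about $\psi$; this is exactly where finite-dimensionality enters essentially, and it is also the step that would fail for a general (infinite-dimensional) factor, where one would instead need a separability hypothesis and the Pettis measurability theorem to reconcile weak and norm measurability. In the present type ${\rm I}_d$ setting no such machinery is required, and the argument is the elementary decomposition above.
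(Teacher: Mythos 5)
Your proof is correct and follows essentially the same route as the paper's: both reduce the question to the $d^2$ scalar coordinate functions of $\psi$ and use the topological identification of $\B(\H)$ with $\C^{d^2}$. If anything, your treatment is slightly more careful, since you explicitly write a general open subset of $\C^{n}$ as a countable union of open boxes (the paper simply ``views'' every open set as a product of open sets), and you derive the coordinatewise characterisation of measurability directly from the continuity of the functionals $\omega\in\T(\H)$ rather than citing it.
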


\begin{proof} Fix an orthonormal basis $\{e_1,\dots,e_d\}$ of $\H$. Because 
$\psi$ is measurable if and only if
each coordinate function $\psi_{ij}(x)=\langle\psi(x)e_j,e_i\rangle$ is 
measurable~\cite[Section~III]{farenick--plosker--smith2011}, and 
because $\B(\H)$ is topologically equivalent to $\C^{d^2}$ in the product topology, 
we may assume without loss of generality that
$\psi:X\rightarrow\C^{d^2}$ and $\psi(x)=\left(\psi_{11}(x),\psi_{12}(x),\dots,\psi_{dd}(x)\right)$ for $x\in X$. Furthermore, every open
set $U\subseteq\C^{d^2}$ will be viewed as a product $U=\displaystyle\prod_{i,j=1}^d U_{ij}$ of open sets $U_{ij}\subseteq\C$.
Suppose now that $\psi$ is measurable. As each $\psi_{ij}$ is therefore measurable, 
we have that $\psi_{ij}(U_{ij})\in\mathcal F(X)$
for every open set $U_{ij}\subseteq\C^{d^2}$. Thus, if $U=\displaystyle\prod_{i,j=1}^d U_{ij}$
is open in $\C^{d^2}$, then
\(
\displaystyle\psi^{-1}\left(U\right)\,=\,\bigcap_{i,j=1}^d\psi_{ij}^{-1}(U_{ij})
\)
is a measurable set.
Conversely, if $\psi^{-1}(U)$ is a measurable set, 
then for a fixed ordered pair $(k,\ell)$ and any open set $U_{k\ell}\subseteq\C$,
we have
\(
\psi^{-1}_{k\ell}(U_{k\ell})\,=\,\psi^{-1}(U),
\)
where $U=\displaystyle\prod_{i,j=1}^d U_{ij}$ is the open set for which $U_{ij}=\C$ 
for all $(i,j)\not=(k,\ell)$. Thus, 
$\psi_{k\ell}$ is a measurable function.
\end{proof}

Mimicking the classical definition of a regular probability measure, we have the following.

\begin{defn} A quantum
probability measure $\nu: \F(X) \to \B(\H)$ is \emph{regular} if for every $E \in \F(X)$, 
\begin{eqnarray*}
\nu(E)&=&\inf\{\nu(U)\,|\,U\subseteq X\mbox{ is open, and }E\subseteq U\}\\
&=&\sup\{\nu(K)\,|\,K\mbox{ is compact, and }K\subseteq E\}.
\end{eqnarray*}
\end{defn}

We note that because $\B(\H)$ is a von Neumann algebra, the infimum and supremum in the 
definition of regular measure above exist. Furthermore,
because the normalised trace $\tau$ on $\B(\H)$ is a normal linear functional, the induced 
classical probability measure $\mu=\tau\circ \nu$ on $(X,\F(X))$ is regular
if the quantum probability measure $\nu$ is. This leads to the  next result which is the quantum analogue of the classical Lusin theorem. 

\begin{theorem}\label{lusinthm}
If $\psi:X\rightarrow\B(\H)$ is a quantum random variable and if $\nu$ is a regular quantum 
probability measure on $(X,\F(X))$, where $\F(X)$ is the $\sigma$-algebra of 
Borel sets of a locally compact Hausdorff space $X$,
then for every $\varepsilon>0$ there is a continuous function 
$\vartheta:X\rightarrow\B(\H)$ with 
compact support such that $\mu(\{x\in X\,|\, \psi(x)\neq \vartheta(x)\})<\epsilon$. 
\end{theorem}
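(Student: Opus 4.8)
Fix an orthonormal basis $\{e_1,\dots,e_d\}$ of $\H$, so that $\psi$ is identified with its coordinate functions $\psi_{ij}(x) = \langle \psi(x)e_j, e_i\rangle$, each of which is a measurable complex-valued function on $X$ by the discussion preceding Theorem~\ref{msblqrv}. Since $\nu$ is a regular quantum probability measure and the normalised trace $\tau$ on $\B(\H)$ is a normal positive linear functional, the induced classical probability measure $\mu = \tau\circ\nu$ on $(X,\F(X))$ is a regular Borel probability measure on the locally compact Hausdorff space $X$, as noted in the remark just before the theorem. Thus each $\psi_{ij}$ is a $\mu$-measurable scalar function on $X$, and the classical hypotheses are in force.

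\textbf{Next I would apply the classical Lusin theorem to each coordinate separately.} Given $\eps > 0$, for each ordered pair $(i,j)$ with $1\le i,j\le d$, the classical Lusin theorem (in the locally compact Hausdorff setting, e.g.\ Rudin, \emph{Real and Complex Analysis}) produces a continuous compactly supported function $\vartheta_{ij}:X\rightarrow\C$ such that $\mu\bigl(\{x\in X : \psi_{ij}(x)\ne\vartheta_{ij}(x)\}\bigr) < \eps/d^2$. Assemble these into a single function $\vartheta:X\rightarrow\B(\H)$ by declaring $\langle\vartheta(x)e_j,e_i\rangle = \vartheta_{ij}(x)$; equivalently, $\vartheta = \sum_{i,j}\vartheta_{ij}\otimes E_{ij}$ under the identification $\B(\H)\cong M_d(\C)$, where $E_{ij}$ are the matrix units. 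Because $\B(\H)$ carries the product topology on its $d^2$ matrix entries (this is exactly the topological equivalence $\B(\H)\cong\C^{d^2}$ invoked in the proof of Theorem~\ref{msblqrv}), $\vartheta$ is continuous, and its support is contained in the union of the supports of the $\vartheta_{ij}$, hence compact.

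\textbf{Finally I would control the exceptional set via a union bound.} If $\psi(x)\ne\vartheta(x)$, then $\psi_{ij}(x)\ne\vartheta_{ij}(x)$ for at least one pair $(i,j)$, so
\[
\{x\in X : \psi(x)\ne\vartheta(x)\} \;\subseteq\; \bigcup_{i,j=1}^d \{x\in X : \psi_{ij}(x)\ne\vartheta_{ij}(x)\},
\]
whence $\mu(\{x : \psi(x)\ne\vartheta(x)\}) \le \sum_{i,j=1}^d \mu(\{x:\psi_{ij}(x)\ne\vartheta_{ij}(x)\}) < d^2\cdot(\eps/d^2) = \eps$, as required.

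\textbf{The only real obstacle is bookkeeping rather than substance:} one must be slightly careful that "quantum random variable" here means $\psi$ is measurable in the sense that $\omega\circ\psi$ is measurable for all $\omega\in\T(\H)$, and confirm this is equivalent to measurability of all coordinate functions $\psi_{ij}$ — but this equivalence is precisely what is recorded in \cite[Section~III]{farenick--plosker--smith2011} and reused in Theorem~\ref{msblqrv}, so it is available. One should also note that a finite union of compact sets is compact and that continuity of the assembled $\vartheta$ follows from continuity of each entry together with the finite-dimensionality of $\H$; neither point requires any genuinely new argument. The proof is therefore essentially a componentwise transfer of the scalar Lusin theorem, with the finite-dimensionality of $\H$ doing all the work of making "componentwise" legitimate.
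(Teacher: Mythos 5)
Your proposal is correct and follows essentially the same route as the paper's own proof: reduce to the coordinate functions $\psi_{ij}$, note that $\mu=\tau\circ\nu$ is regular, apply the classical Lusin theorem to each coordinate with tolerance $\eps/d^2$, assemble the continuous compactly supported $\vartheta$ entrywise, and conclude with the union bound over the $d^2$ exceptional sets. No gaps; the extra remarks about continuity of the assembled map and compactness of the union of supports are exactly the bookkeeping the paper leaves implicit.
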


\begin{proof} Let $\{\psi_{ij}\}_{i,j=1}^d$ be the set of coordinate functions defined 
in the proof of Theorem~\ref{msblqrv}. Because $\nu$ is a regular measure and $\tau$ is a normal state, the 
induced measure $\mu=\tau\circ\nu$ is also regular.
Hence, the classical Lusin theorem may be invoked to obtain, for each $i$ and $j$, 
a continuous function $\vartheta_{ij}:X\rightarrow\C$ with compact support and such that 
$\mu(D_{ij})<\varepsilon/d^2$, where
\(
D_{ij}\,=\,\{x\in X\,|\,\psi_{ij}(x)\neq \vartheta_{ij} (x)\}
\).
Let $\vartheta:X\rightarrow\B(\H)$ be the continuous map induced by the coordinate functions $\vartheta_{ij}$, and 
define $D$ to be the set
\(
D\,=\,\{x\in X\,|\,\psi(x)\neq \vartheta (x)\}
\)
which is measurable by Theorem~\ref{msblqrv}. Because $\displaystyle D\subseteq \bigcup_{i,j=1}^d D_{ij}$, we 
deduce that 
$\displaystyle\mu(D)\leq \sum_{i,j=1}^d\mu(D_{ij})<\varepsilon$.
\end{proof}

The following theorem and two definitions summarise the results of~\cite[Section~III]{farenick--plosker--smith2011} 
relevant for our purposes; see also~\cite{farenick--kozdron2012, kylerthesis} for additional details.

\begin{theorem}
If $\nu$ is a quantum probability measure, then $\nu$ is absolutely continuous with respect to the induced classical 
measure $\mu$, and there exists a quantum random variable
denoted by $\displaystyle \frac{\dd\nu}{\dd\mu}$ such that 
\begin{equation}\label{rn defn}
\displaystyle\int_E\tr\left(\rho\,\displaystyle\frac{\dd\nu}{\dd\mu}(x) \right)\d\mu(x) \,=\,
\tr\left(\rho\,\nu(E)\right) \,,
\end{equation}
for all $E\in \F(X)$ and every density operator $\rho$.
\end{theorem}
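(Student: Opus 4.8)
The plan is to establish the Radon--Nikodym-type statement by reducing the operator-valued claim to a finite family of classical Radon--Nikodym derivatives, using the coordinate decomposition already exploited in the proofs of Theorem~\ref{msblqrv} and Theorem~\ref{lusinthm}. First I would fix an orthonormal basis $\{e_1,\dots,e_d\}$ of $\H$ and consider the scalar set functions $\nu_{ij}(E)=\langle\nu(E)e_j,e_i\rangle$ for $E\in\F(X)$. Condition~(3) in the definition of a POVM (applied to the normal functionals $\omega\mapsto\langle\omega\,\cdot\,e_j,e_i\rangle$, or rather their counterparts in $\T(\H)$) shows each $\nu_{ij}$ is a complex measure on $(X,\F(X))$. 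The key observation is absolute continuity: since $\mu=\frac1d\tr\circ\nu$ and $\nu(E)\geq0$, if $\mu(E)=0$ then $\tr\nu(E)=0$, and a positive operator with zero trace is zero, so $\nu(E)=0$ and hence $\nu_{ij}(E)=0$ for all $i,j$. Thus $\nu_{ij}\ll\mu$, and the classical (complex, or vector-valued) Radon--Nikodym theorem yields $\mu$-integrable functions $g_{ij}:X\to\C$ with $\nu_{ij}(E)=\int_E g_{ij}\,\dd\mu$ for every $E$.

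Next I would assemble these into an operator-valued function $\frac{\dd\nu}{\dd\mu}(x):=\sum_{i,j}g_{ij}(x)\,|e_i\rangle\langle e_j|$, i.e.\ the matrix $(g_{ij}(x))$ in the chosen basis. One should check it is a quantum random variable in the sense of the paper: it is measurable because each coordinate function $g_{ij}$ is (Theorem~\ref{msblqrv} and the coordinatewise criterion cited there), and essential boundedness needs a short argument --- for instance, from $0\le\nu(E)\le\nu(X)=1$ one gets that the ``averages'' $\frac{1}{\mu(E)}\int_E(g_{ij})\,\dd\mu$ lie in a fixed bounded set, and a Lebesgue differentiation / martingale argument (or the finite-dimensional Radon--Nikodym theorem for $\B(\H)$-valued measures of bounded variation) shows $\frac{\dd\nu}{\dd\mu}$ is $\mu$-essentially bounded, so it indeed belongs to $L^\infty_\H(X,\nu)$. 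Then for any density operator $\rho$ one computes $\int_E\tr\!\big(\rho\,\frac{\dd\nu}{\dd\mu}(x)\big)\,\dd\mu(x)=\sum_{i,j}\int_E\rho_{ji}\,g_{ij}\,\dd\mu=\sum_{i,j}\rho_{ji}\,\nu_{ij}(E)=\tr(\rho\,\nu(E))$, which is exactly~\eqref{rn defn}; here $\rho_{ji}=\langle\rho e_i,e_j\rangle$ and one uses linearity of the integral over the finite index set.

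An alternative, more intrinsic route avoids choosing a basis: regard $\nu$ as a $\B(\H)$-valued measure of bounded variation (bounded variation follows since $\nu$ is positive and $\nu(X)$ is a fixed operator, so $\sum_k\|\nu(E_k)\|\le\sum_k\tr\nu(E_k)=d\,\mu(\bigcup E_k)$ over any partition), invoke the Radon--Nikodym theorem for vector measures with values in a finite-dimensional (hence Radon--Nikodym property) Banach space relative to $\mu$, and then verify~\eqref{rn defn} by testing against the finitely many functionals $\rho\mapsto\tr(\rho\,\cdot\,)$. Uniqueness up to $\mu$-null sets is immediate from the corresponding classical uniqueness statement applied to each $\nu_{ij}$, or from the fact that the functionals $\tr(\rho\,\cdot\,)$ separate points of $\B(\H)$.

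The main obstacle I anticipate is not the existence of the derivative --- which is essentially the classical theorem applied coordinatewise --- but confirming that $\frac{\dd\nu}{\dd\mu}$ is genuinely $\mu$-\emph{essentially bounded}, i.e.\ lands in $L^\infty_\H(X,\nu)$ and not merely in $L^1$. The positivity bound $0\le\nu(E)\le1$ must be leveraged correctly: the cleanest argument is that for $\mu$-a.e.\ $x$ the derivative is a positive operator (being a limit of $\frac{1}{\mu(E)}\nu(E)\ge0$ along a differentiation basis) with trace equal to the classical Radon--Nikodym derivative $\frac{\dd(\tr\nu)}{\dd\mu}=d$ a.e., forcing $0\le\frac{\dd\nu}{\dd\mu}(x)\le d\cdot 1$ a.e.; a careful treatment of the exceptional null sets (which may depend on the differentiation basis, so one should phrase this via the abstract martingale convergence theorem or simply quote the finite-dimensional vector-measure Radon--Nikodym theorem, where essential boundedness comes packaged with the bounded-variation hypothesis) is the one point that merits care.
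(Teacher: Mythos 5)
The paper does not prove this theorem at all: it is stated as a summary of results imported from \cite[Section~III]{farenick--plosker--smith2011}, so there is no internal proof to compare against. Your argument is correct and is essentially the standard one for that cited result: the coordinatewise reduction $\nu_{ij}(E)=\langle\nu(E)e_j,e_i\rangle$, the observation that $\mu(E)=0$ forces $\tr\nu(E)=0$ and hence $\nu(E)=0$ by positivity, the classical Radon--Nikod\'ym theorem applied entrywise, and the direct verification of \eqref{rn defn} by expanding $\tr\bigl(\rho\,\tfrac{\dd\nu}{\dd\mu}(x)\bigr)=\sum_{i,j}\rho_{ji}g_{ij}(x)$ all go through. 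Two small refinements: the theorem only asserts that $\tfrac{\dd\nu}{\dd\mu}$ is a \emph{quantum random variable}, which by the paper's definition means measurable, so the essential-boundedness worry you flag is not needed for the statement itself (though it does hold, since a.e.\ positivity together with $\tr\tfrac{\dd\nu}{\dd\mu}=d$ a.e.\ gives $0\le\tfrac{\dd\nu}{\dd\mu}\le d\cdot 1$ a.e.); and for a.e.\ positivity on an abstract measurable space you should avoid differentiation bases, arguing instead that for each $\xi$ in a countable dense subset of the unit sphere the positive measure $E\mapsto\langle\nu(E)\xi,\xi\rangle$ has nonnegative Radon--Nikod\'ym derivative $\langle\tfrac{\dd\nu}{\dd\mu}(\cdot)\xi,\xi\rangle$ a.e., and then intersecting the countably many null sets.
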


The Borel function $\displaystyle\frac{\dd\nu}{\dd\mu}$  is
called the \emph{principal Radon-Nikod\'ym derivative of $\nu$} and is a positive operator for
$\mu$-almost all $x\in X$.

\begin{defn}\label{nuintdefn}
\begin{enumerate}
\item
A quantum random variable $\psi$ is \emph{$\nu$-integrable} if for every density operator $\rho$ the complex-valued function
\[
\psi_\rho(x)\,=\,\tr\left(\rho\,\left(\displaystyle\frac{\dd\nu}{\dd\mu}(x)\right)^{1/2}\psi(x)\left(\displaystyle\frac{\dd\nu}{\dd\mu}(x)\right)^{1/2}\right)\,,\;x\in X,
\]
is $\mu$-integrable.
\item
The \emph{integral} of a $\nu$-integrable function $\psi:X\rightarrow\B(\H)$
is defined to be the unique operator acting on $\H$ having the property that
\[
\tr\left(\rho\int_X\psi\d\nu\right)\,=\,\int_X\,\psi_\rho\d\mu\,,
\]
for every density operator $\rho$.
\end{enumerate}
\end{defn}

\begin{defn} If  
 $\nu:\F(X)\rightarrow\B(\H)$ is a quantum probability measure, then the map 
$\mathbb E_{\nu}:L_\H^\infty(X,\nu)\rightarrow\B(\H)$ defined by
\[
\QE{\nu}{\psi} = \int_X\psi\d\nu
\]
is called the \emph{quantum expectation} of $\psi$ with respect to $\nu$.
\end{defn}

A version of the following example first appeared in~\cite[Example~3.4]{farenick--plosker--smith2011}; see also~\cite[Theorem~2.3(4)]{farenick--kozdron2012}.

\begin{example}\label{quantumaverageexample} Let $X=\{x_1, \dots, x_n\}$ and let $\F(X)$ be the power set of $X$. 
If  $h_1, \dots, h_n\in \B(\H)_+$ are such that $h_1+\cdots + h_n=1\in \B(\H)$, and $\nu$ satisfies
$\nu(\{x_j\})=h_j$ for $j=1, \dots, n$, then for every $\psi:X\rightarrow \B(\H)$,
\[
\QE{\nu}{\psi}=\int_X\,\psi \d\nu=\sum_{j=1}^nh_j^{1/2}\psi(x_j)h_j^{1/2}.
\]
Thus one can view $\QE{\nu}{\psi}$ as a quantum averaging of $\psi$. 
\end{example}

Recall~\cite[Chapter~3]{Paulsen-book} that a linear map $\varphi:\A\rightarrow\B$ of unital C$^*$-algebras is a unital completely positive (ucp) 
map if $\varphi(1_\A)=1_\B$ and
the induced linear maps
\[
\varphi\otimes{\rm id_n}:\A\otimes M_n(\C)\rightarrow\B\otimes M_n(\C)
\]
are positive for every $n\in\mathbb N$.

\begin{theorem}\label{varineq} Quantum expectation is a completely positive operation. That is, the linear map 
$\mathbb E_{\nu}:L_\H^\infty(X,\nu)\rightarrow\B(\H)$
is a unital completely positive map, for every 
quantum probability measure $\nu$.
\end{theorem}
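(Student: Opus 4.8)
The plan is to rewrite quantum expectation as an ordinary operator-valued integral and then to recognise it as an average of elementary completely positive maps. First I would set $R=\frac{\dd\nu}{\dd\mu}$, the principal Radon--Nikod\'ym derivative, and observe, by applying \eqref{rn defn} with $\rho$ a density operator and $E\in\F(X)$ arbitrary, that $\int_E\tr\bigl(R(x)\bigr)\d\mu(x)=\tr\bigl(\nu(E)\bigr)=d\,\mu(E)$; this forces $\tr\bigl(R(x)\bigr)=d$ for $\mu$-almost every $x$, so $R$ is essentially bounded with $\norm{R(x)}\le d$ a.e. Since $R$ is measurable, so is $x\mapsto R(x)^{1/2}$, and hence for any $\psi\in L_\H^\infty(X,\nu)$ the function $x\mapsto R(x)^{1/2}\psi(x)R(x)^{1/2}$ is bounded, measurable and $\B(\H)$-valued, hence Bochner $\mu$-integrable. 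Pairing with an arbitrary $\rho\in\T(\H)$ (first a density operator, then any trace-class operator by linearity) and moving $\tr(\rho\,\cdot\,)$ through the integral, one matches the defining property in Definition~\ref{nuintdefn} to conclude
\[
\mathbb E_\nu(\psi)=\int_X R(x)^{1/2}\psi(x)R(x)^{1/2}\d\mu(x).
\]
Linearity of $\mathbb E_\nu$ is then immediate from the integral, and putting $\psi\equiv 1$ gives $\mathbb E_\nu(1)=\int_X R\d\mu=\nu(X)=1$, so $\mathbb E_\nu$ is unital.

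For complete positivity, I would fix $n\in\N$ and identify $L_\H^\infty(X,\nu)\otimes M_n(\C)$ with $L^\infty(X,\mu)\otimes M_{dn}(\C)$, that is, with the bounded measurable functions $\Psi:X\to\B(\H)\otimes M_n(\C)=\B(\H\otimes\C^n)$, using the standard fact that such a $\Psi$ is positive in this algebra exactly when $\Psi(x)\ge 0$ for $\mu$-almost every $x$. Writing $\Psi=[\psi_{k\ell}]_{k,\ell=1}^n$ with entries in $L_\H^\infty(X,\nu)$, the integral formula above applied entrywise yields
\[
\bigl(\mathbb E_\nu\otimes\mathrm{id}_n\bigr)(\Psi)=\bigl[\,\mathbb E_\nu(\psi_{k\ell})\,\bigr]_{k,\ell=1}^n=\int_X\bigl(R(x)^{1/2}\otimes I_n\bigr)\,\Psi(x)\,\bigl(R(x)^{1/2}\otimes I_n\bigr)\d\mu(x),
\]
because $R(x)^{1/2}$ is self-adjoint and conjugation by $R(x)^{1/2}\otimes I_n$ sends the $(k,\ell)$ entry to $R(x)^{1/2}\psi_{k\ell}(x)R(x)^{1/2}$. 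When $\Psi\ge 0$ the integrand is a.e.\ equal to $S(x)^*\Psi(x)S(x)\ge 0$ with $S(x)=R(x)^{1/2}\otimes I_n$; since the positive cone of $\B(\H\otimes\C^n)$ is closed and convex, the Bochner integral over the probability space $(X,\mu)$ of an a.e.-positive, integrable, operator-valued function is positive (equivalently, pair with a density operator $\rho$ on $\H\otimes\C^n$ and move $\tr(\rho\,\cdot\,)$ inside the integral). Hence $\mathbb E_\nu\otimes\mathrm{id}_n$ is positive for every $n$, and together with unitality this shows that $\mathbb E_\nu$ is a unital completely positive map.

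I expect the only genuinely delicate points to be (i) the essential boundedness of $R=\frac{\dd\nu}{\dd\mu}$, handled by the trace identity above, which is what upgrades the scalar integrability of the functions $\psi_\rho$ to Bochner integrability of the operator-valued function $R^{1/2}\psi R^{1/2}$; and (ii) the index bookkeeping that lets $\otimes\,\mathrm{id}_n$ pass through the integral, together with the identification of positivity in $L^\infty(X,\mu)\otimes M_{dn}(\C)$ with pointwise-a.e.\ positivity. Both are routine once carefully set up, and with them in hand the theorem reduces to the observation that $\mathbb E_\nu$ is an integral average of the maps $a\mapsto R(x)^{1/2}aR(x)^{1/2}$, each of which is plainly completely positive.
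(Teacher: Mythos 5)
Your proof is correct, but it takes a genuinely different route from the paper. The paper's argument is essentially a two-line reduction: it transports $\mathbb E_\nu$ through the Gelfand isomorphism $\Gamma_\H: L_\H^\infty(X,\nu)\rightarrow C(Z_\nu)\otimes\B(\H)$ and then cites \cite[Theorem~3.5]{farenick--plosker--smith2011} for the complete positivity of the resulting map on the homogeneous C$^*$-algebra $C(Z_\nu)\otimes\B(\H)$, using that the $*$-isomorphism $\Gamma_\H^{-1}$ is itself completely positive. You instead give a self-contained argument: you establish the explicit integral representation $\mathbb E_\nu(\psi)=\int_X R^{1/2}\psi R^{1/2}\d\mu$ with $R=\frac{\dd\nu}{\dd\mu}$, and then observe that this exhibits $\mathbb E_\nu$ as a $\mu$-average of the congruence maps $a\mapsto R(x)^{1/2}aR(x)^{1/2}$, each completely positive, so that positivity of $\mathbb E_\nu\otimes\mathrm{id}_n$ follows from pointwise positivity of the integrand plus closedness of the positive cone. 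Your preliminary steps are all sound: the trace identity $\int_E\tr(R)\d\mu=\tr(\nu(E))=d\,\mu(E)$ (obtained from \eqref{rn defn} with $\rho=\tfrac{1}{d}1$, or by summing over rank-one density operators) does force $\tr(R(x))=d$ a.e., hence $\norm{R(x)}\le d$ a.e.\ since $R(x)\ge 0$, and in finite dimensions this gives Bochner integrability and lets the uniqueness clause of Definition~\ref{nuintdefn} identify your integral with $\QE{\nu}{\psi}$. What your approach buys is transparency and independence from the earlier reference -- it is the continuous analogue of Example~\ref{quantumaverageexample} and makes the mechanism of complete positivity visible; what the paper's approach buys is brevity and the Gelfand-transform picture $L_\H^\infty(X,\nu)\cong C(Z_\nu)\otimes\B(\H)$, which it reuses in the proofs of Theorems~\ref{reducingspectrum} and \ref{sarah}.
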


\begin{proof} 
The linearity of the map $\E_{\nu}$ follows readily by definition. 
Because
the algebra $L^\infty(X,\mu)$ is a unital abelian C$^*$-algebra, where $\mu$ is induced 
by a quantum probability measure $\nu$,
the Gelfand transform $\Gamma:L^\infty(X,\mu)\rightarrow C(Z_\nu)$ is a unital C$^*$-algebra 
isomorphism, where
$Z_\nu$ is the maximal ideal space of $L^\infty(X,\mu)$. The topological space $Z_\nu$ 
is necessarily compact, Hausdorff, and totally disconnected. 
Hence, if $\H$ has finite dimension $d$, then $L_\H^\infty(X,\nu)$ and $C(Z_\nu)\otimes\B(\H)$ are isomorphic C$^*$-algebras via the 
unital $*$-isomorphism 
\(
\Gamma_{\H}=\Gamma\otimes{\rm id}_{\B(\H)}: L_\H^\infty(X,\nu) \rightarrow C(Z_\nu)\otimes\B(\H).
\)
Because the map $\mathbb E_{\nu}\circ\Gamma_\H^{-1}:C(Z_\nu)\otimes\B(\H)\rightarrow\B(\H)$ is 
unital and completely positive~\cite[Theorem~3.5]{farenick--plosker--smith2011}
and because the homomorphism $\Gamma_\H^{-1}$ is completely positive, 
the linear map $\mathbb E_{\nu}:L_\H^\infty(X,\nu)\rightarrow\B(\H)$ is necessarily
completely positive.
\end{proof}

Theorem~\ref{varineq} gives rise to the following operator inequality.

\begin{corollary}[Schwarz Inequality] If the operators $h_1,\dots,h_n\in \B(\H)_+$ satisfy $h_1^2+ \cdots +h_n^2=1$, then for all $z_1,\dots,z_n\in\B(\H)$,
\begin{equation}\label{cs ineq}
\left( \sum_{j=1}^n h_j z_j h_j\right)^* \left( \sum_{j=1}^n h_j z_j h_j\right) \,\leq\, \sum_{j=1}^n h_jz_j^*z_j h_j.
\end{equation}

\end{corollary}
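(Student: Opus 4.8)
The plan is to realise the Schwarz inequality~\eqref{cs ineq} as a special case of the complete positivity of quantum expectation established in Theorem~\ref{varineq}. First I would choose the measure space to be the finite set $X=\{x_1,\dots,x_n\}$ equipped with its power set, and define a quantum probability measure $\nu$ by $\nu(\{x_j\})=h_j^2$; this is legitimate precisely because each $h_j^2\in\B(\H)_+$ and $\sum_j h_j^2=1$, so $\nu(X)=1$. By Example~\ref{quantumaverageexample} (with the operators there taken to be $h_j^2$, whose positive square roots are the $h_j$), the quantum expectation of any $\psi:X\to\B(\H)$ is $\QE{\nu}{\psi}=\sum_{j=1}^n h_j\,\psi(x_j)\,h_j$.

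Next I would invoke Theorem~\ref{varineq}: the map $\E_\nu:L_\H^\infty(X,\nu)\to\B(\H)$ is unital and completely positive. The key algebraic input is the standard Kadison–Schwarz inequality for unital completely positive maps applied at the appropriate matrix level — more directly, I would use the elementary fact that for a ucp map $\varphi$ and any element $a$, one has $\varphi(a)^*\varphi(a)\le\varphi(a^*a)$. Here $a$ is the quantum random variable $\psi$ with $\psi(x_j)=z_j$; since $X$ is finite the algebra $L_\H^\infty(X,\nu)$ is just the direct sum of $n$ copies of $\B(\H)$, so $\psi^*\psi$ is the quantum random variable $x_j\mapsto z_j^*z_j$. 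Applying $\E_\nu$ and using the formula from the previous paragraph gives exactly
\[
\left(\sum_{j=1}^n h_j z_j h_j\right)^*\left(\sum_{j=1}^n h_j z_j h_j\right)\;\le\;\sum_{j=1}^n h_j z_j^* z_j h_j,
\]
which is~\eqref{cs ineq}.

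The only genuine subtlety — and the step I would be most careful about — is justifying the use of the Kadison–Schwarz inequality $\varphi(a)^*\varphi(a)\le\varphi(a^*a)$ from mere complete positivity; this follows from the $2\times 2$ positivity of $\begin{pmatrix} 1 & a \\ a^* & a^*a\end{pmatrix}$ together with the $2$-positivity of $\varphi$, so it is available here because $\E_\nu$ is completely positive, not merely positive (positivity alone would not suffice, since the $h_j$ and $z_j$ need not commute). Everything else is a routine unwinding of the definitions, so the proof is short: set up $\nu$, quote Example~\ref{quantumaverageexample} for the explicit form of $\E_\nu$, quote Theorem~\ref{varineq} for complete positivity, and apply Kadison–Schwarz.
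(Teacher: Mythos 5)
Your proposal is correct and follows essentially the same route as the paper's own proof: the same finite space $X=\{x_1,\dots,x_n\}$ with $\nu(\{x_j\})=h_j^2$ and $\psi(x_j)=z_j$, the explicit formula for $\QE{\nu}{\psi}$ from Example~\ref{quantumaverageexample}, and the Schwarz inequality $\QE{\nu}{\psi}^*\QE{\nu}{\psi}\leq\QE{\nu}{\psi^*\psi}$ for completely positive maps. Your additional remark justifying Kadison--Schwarz via $2$-positivity is sound but unnecessary detail; the paper simply cites it from Paulsen's book.
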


\begin{proof} Let $X=\{x_1,\dots,x_n\}$, and let $\F(X)$ be the power set of $X$. If $\nu$ is the quantum probability measure 
for which $\nu(\{x_j\})=h_j^2$, and if the quantum random variable $\psi:X\rightarrow \B(\H)$ is defined by $\psi(x_j)=z_j$, for each $j=1,\ldots,n$, then $\psi\in L_\H^\infty(X,\nu)$
and $\QE{\nu}{\psi}= h_1z_1h_1 + \cdots + h_nz_nh_n$ as in Example~\ref{quantumaverageexample}. By the Schwarz inequality for completely positive linear maps~\cite[Proposition~3.3]{Paulsen-book}, we have $\QE{\nu}{\psi}^*\QE{\nu}{\psi}\leq \QE{\nu}{\psi^*\psi}$, which is precisely inequality~\eqref{cs ineq}.
\end{proof}

\section{The Essential Range of Quantum Random Variables}

\begin{defn} 
Let $\psi:X\rightarrow\B(\H)$ be a quantum random variable. The \emph{essential range} of $\psi$
is the set $\essran \psi$ of all operators $\lambda\in\B(\H)$ for which
\(
\mu\left(\psi^{-1}(U)\right)\,>\,0,
\)
for every neighbourhood $U$ of $\lambda$.
\end{defn}

The essential range of $\psi:X\rightarrow\B(\H)$ is closed and the $\mu$-measure of the set 
$\{x\in X\,|\,\psi(x)\not\in \essran \psi\}$ is zero. Thus, if $\psi_1$ and $\psi_2$ determine the same element $\psi\in L_\H^\infty(X,\nu)$, 
then $\psi_1$ and $\psi_2$ have the same essential range. 
Our aim is to identify the essential range with certain spectral elements of~$\psi$.

\begin{defn} If $\A$ is a unital C$^*$-algebra and $a\in\A$, let $\cstar(a)$ be the unital C$^*$-subalgebra generated by $a$. For $d\in\mathbb N$, the set
\[
\spec^d(a)\,=\,\{\varrho(a)\,|\,\varrho:\cstar(a)\rightarrow M_d(\C) \mbox{ is a homomorphism}\}
\]
is called the \emph{Gelfand spectrum} of $a$.
\end{defn}

Of course, for many elements $a$, it will be the case that $\spec^d(a)$ is empty. A notable exception 
occurs with (essentially) bounded measurable functions $\psi:X\rightarrow\C$, 
in which case $\spec^1(\psi)$, where $\psi$ is considered to be an element of the abelian von 
Neumann algebra $L^\infty(X,\mu)$, coincides with the essential range of $\psi$~\cite{Lang-book}. 
However, the case of quantum random variables (see Theorem~\ref{reducingspectrum} below) requires the notion
of hypoconvexity~\cite[Definition~1.6]{salinas1979}.

\begin{defn} A nonempty compact subset $Q\subset \B(\H)$ is \emph{hypoconvex} if 
\begin{enumerate}
\item $u^*\lambda u\in Q$ for every unitary $u\in \B(\H)$ and $\lambda\in Q$, and
\item $\displaystyle\sum_{j=1}^m p_j\lambda_j\in Q$ for all $\lambda_1,\dots,\lambda_m\in Q$ and
projections $p_1,\dots,p_m\in \B(\H)$ satisfying $p_1+\cdots+p_m=1$ and $p_j\lambda_j=\lambda_jp_j$ for each $j$.
\end{enumerate}
If $Q\subset \B(\H)$ is an arbitrary nonempty compact set, then $Q^\sim$ denotes the \emph{hypoconvex hull}
of $Q$, namely, the smallest hypoconvex set that contains $Q$.
\end{defn}

If $\H=\C$, then the two conditions above for the hypoconvexity of a compact set
$Q$ are trivially satisfied. Hence, the notion of hypoconvex set is distinguished from compactness only at
dimension $d=2$ and higher.

\begin{theorem}\label{reducingspectrum} The following two statements are equivalent for 
a Hilbert space $\H$ of dimension $d$ and a quantum random variable
$\psi\in L_\H^\infty(X,\nu)$.
\begin{enumerate}
\item $\lambda\in \spec^d(\psi)$.
\item There exists a unitary  $v:\C^d\rightarrow \H$ such that 
$v\lambda v^{-1}\in\left(\essran \psi\right)^\sim$.
\end{enumerate}
\end{theorem}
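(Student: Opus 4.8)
The plan is to pass to the Gelfand realization of $L^\infty_\H(X,\nu)$ and prove the single equality $\spec^d(\psi)=(\essran \psi)^\sim$, from which $(1)\Leftrightarrow(2)$ will follow. Let $\Gamma_\H=\Gamma\otimes\mathrm{id}_{\B(\H)}\colon L^\infty_\H(X,\nu)\to C(Z_\nu)\otimes\B(\H)\cong C(Z_\nu,\B(\H))$ be the $*$-isomorphism used in the proof of Theorem~\ref{varineq}, set $F=\Gamma_\H(\psi)\in C(Z_\nu,\B(\H))$, and let $A=\cstar(F)\subseteq C(Z_\nu,\B(\H))$. Since $\Gamma_\H$ restricts to a $*$-isomorphism $\cstar(\psi)\to A$ carrying $\psi$ to $F$, we have $\spec^d(\psi)=\{\varrho(F):\varrho\colon A\to M_d(\C)\text{ a (unital, }*\text{-)homomorphism}\}$. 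After fixing a unitary $\iota\colon\C^d\to\H$, so that $\B(\H)$ is identified with $M_d(\C)$, and using that $(\essran \psi)^\sim$ is unitarily invariant, the unitary $v$ in statement $(2)$ is precisely what transports the abstract matrices $\varrho(F)$ into $\B(\H)$, and the equivalence $(1)\Leftrightarrow(2)$ is exactly the assertion $\spec^d(\psi)=(\essran \psi)^\sim$ inside $M_d(\C)$. I would prove this by two inclusions.

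First I would identify $\essran \psi$ with the topological range $F(Z_\nu)$. The essential range of $\psi$ is the support of the pushforward measure $\psi_*\mu$ on $\B(\H)$, a compact set since $\psi\in L^\infty_\H(X,\nu)$. Let $\tilde\mu$ be the regular Borel probability measure on the compact space $Z_\nu$ representing the faithful integration state $g\mapsto\int_X g\,\d\mu$ under $\Gamma$; faithfulness forces $\tilde\mu$ to have full support. For every $*$-polynomial $p$ in the $d^2$ matrix-coordinate functions and their conjugates one has $p\circ\psi\in L^\infty(X,\mu)$ and $\Gamma(p\circ\psi)=p\circ F$, whence $\int p\,\d(\psi_*\mu)=\int p\circ\psi\,\d\mu=\int p\circ F\,\d\tilde\mu=\int p\,\d(F_*\tilde\mu)$; by Stone--Weierstrass on a compact set containing both supports, $\psi_*\mu=F_*\tilde\mu$. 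Since $\tilde\mu$ has full support and $Z_\nu$ is compact, $\mathrm{supp}(F_*\tilde\mu)=F(Z_\nu)$, so $\essran \psi=F(Z_\nu)$.

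For the inclusion $(\essran \psi)^\sim\subseteq\spec^d(\psi)$, note that for each $z\in Z_\nu$ the evaluation $\mathrm{ev}_z$ restricts to a homomorphism $A\to M_d(\C)$ with $\mathrm{ev}_z(F)=F(z)$, so $\essran \psi=F(Z_\nu)\subseteq\spec^d(\psi)$; it therefore suffices to show that $\spec^d(\psi)$ is a nonempty compact hypoconvex set, since then it contains the hypoconvex hull. Nonemptiness and compactness follow from the description $\spec^d(\psi)=\{b\in M_d(\C):\|p(b,b^*)\|\le\|p(\psi,\psi^*)\|\text{ for every one-variable }*\text{-polynomial }p\}$, each constraint being closed and bounded and any admissible $b$ determining a homomorphism by continuity. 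Closure under $\lambda\mapsto u^*\lambda u$ is witnessed by $\mathrm{Ad}_{u^*}\circ\varrho$. For the second hypoconvexity axiom, if $\lambda_j=\varrho_j(\psi)\in\spec^d(\psi)$ and $p_1,\dots,p_m$ are projections with $\sum_jp_j=1$ and $p_j\lambda_j=\lambda_jp_j$, then $\sum_jp_j=1$ forces the $p_j$ to be pairwise orthogonal, $p_j$ then commutes with all of $\varrho_j(\cstar(\psi))=\cstar(\lambda_j)$, and $\varrho:=\sum_jp_j\varrho_j$ is a unital $*$-homomorphism with $\varrho(\psi)=\sum_jp_j\lambda_j$; hence $\sum_jp_j\lambda_j\in\spec^d(\psi)$.

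The reverse inclusion $\spec^d(\psi)\subseteq(\essran \psi)^\sim$ is the substantive part. Given a homomorphism $\varrho\colon A\to M_d(\C)$, write $\varrho\cong\bigoplus_{k=1}^m\pi_k$ with each $\pi_k$ irreducible, realized on mutually orthogonal subspaces $\calL_k$ of $\H$. The key lemma I would prove is that \emph{every irreducible representation $\pi$ of $A=\cstar(F)$ is unitarily equivalent to the restriction of some evaluation $\mathrm{ev}_z|_A$ to a reducing subspace of $F(z)$}: pick a unit vector in the space of $\pi$, let $P$ be the associated pure state of $A$ (so $\mathrm{GNS}(P)\cong\pi$), extend $P$ to a state of $C(Z_\nu,\B(\H))$, and choose an extreme point of the nonempty compact convex set of extensions; such an extreme extension restricts to the pure state $P$ and, being extreme among extensions of a pure state, is itself pure, hence --- $C(Z_\nu)$ being abelian --- a vector state of some $\mathrm{ev}_z$, and recognizing the GNS construction gives the lemma. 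Now for each $k$ choose a unitary $u_k$ on $\H$ carrying $\calL_k$ onto the reducing subspace of $F(z_k)$ furnished by the lemma and restricting there to the intertwining unitary; put $\lambda_k=u_k^*F(z_k)u_k$, which lies in $F(Z_\nu)^\sim=(\essran \psi)^\sim$ as a unitary conjugate of $F(z_k)\in\essran \psi$, and let $p_k$ be the projection onto $\calL_k$. Then $p_k\lambda_k=\lambda_kp_k$, the $p_k$ are pairwise orthogonal with sum $1$, and $\sum_kp_k\lambda_k$ is the block operator $\bigoplus_k\pi_k(F)$, which is unitarily equivalent to $\varrho(F)=\varrho(\psi)$; by hypoconvexity of $(\essran \psi)^\sim$ this block operator lies in $(\essran \psi)^\sim$, so some unitary $v\colon\C^d\to\H$ satisfies $v\varrho(\psi)v^{-1}\in(\essran \psi)^\sim$. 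The main obstacle is this last lemma --- that irreducible representations of the singly generated subalgebra $A\subseteq C(Z_\nu,\B(\H))$ are obtained by compressing point evaluations to reducing subspaces --- together with the bookkeeping between homomorphisms into the abstract algebra $M_d(\C)$ and operators on $\H$; the pure-state-extension argument disposes of the lemma, and it is exactly here that noncommutativity forces hypoconvexity of $\spec^d(\psi)$ rather than mere convexity.
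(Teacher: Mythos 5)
Your proposal is correct, and its skeleton matches the paper's: pass through the isomorphism $\Gamma_\H$ to $C(Z_\nu)\otimes\B(\H)$, identify $\essran\psi$ with the range of the continuous representative $F=\Gamma_\H(\psi)$, and reduce the theorem (modulo the unitary $\C^d\to\H$ bookkeeping, which you handle the same way) to the single equality $\spec^d(\psi)=\bigl(F(Z_\nu)\bigr)^\sim$. The difference is in how that equality is established. The paper obtains it in one stroke by citing Salinas \cite[Lemma 2.3]{salinas1979}, after a short character-extension argument showing that the set of matrices $[\omega(f_{ij})]$ over characters $\omega$ of $\cstar(\{f_{ij}\})$ is exactly the pointwise range of $F$; it also asserts without proof that this range coincides with $\essran\psi$. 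You instead prove both ingredients from scratch: the identification $\essran\psi=F(Z_\nu)$ via matching moments of the pushforward measures $\psi_*\mu$ and $F_*\tilde\mu$ (using faithfulness of the integration state to get full support of $\tilde\mu$), and the Salinas lemma itself via (i) the polynomial-norm description of $\spec^d(\psi)$ to get compactness and hypoconvexity, so that $\bigl(\essran\psi\bigr)^\sim\subseteq\spec^d(\psi)$, and (ii) decomposition of an arbitrary unital $*$-homomorphism $\varrho:\cstar(F)\to M_d(\C)$ into irreducibles, each of which is exhibited as a subrepresentation of a point evaluation by extending a pure state of $\cstar(F)$ to an extreme (hence pure) state of $C(Z_\nu)\otimes\B(\H)$ and recognizing the GNS construction. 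All the individual steps check out (in particular, a projection commuting with $\lambda_j$ automatically commutes with $\lambda_j^*$, so $\sum_j p_j\varrho_j$ really is a homomorphism, and the extreme-extension-of-a-pure-state fact is standard). What your route buys is a self-contained proof that makes visible exactly where hypoconvexity, rather than mere C$^*$-convexity or compactness, enters --- namely in reassembling the irreducible pieces $p_k\lambda_k$ with commuting projections; what it costs is length, which is precisely what the citation to Salinas saves.
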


\begin{proof} Consider the isomorphism 
$\Gamma_\H:L_\H^\infty(X,\nu)\rightarrow C(Z_\nu)\otimes\B(\H)$
defined in the proof of Theorem~\ref{varineq}, and fix a computational basis $\mathfrak C=\{e_1,\dots,e_d\}$ of $\H$.
Let $\pi_{\mathfrak C}:\B(\H)\rightarrow M_d(\C)$ be the isomorphism that sends each rank-1 operator
$e_i\otimes e_j\in\B(\H)$ to the canonical matrix unit $E_{ij}$ of $M_d(\C)$. Thus, 
$\rho_{\mathfrak C}={\rm id}_{C(Z_\nu)}\otimes\pi_{\mathfrak C}$ is an isomorphism of $C(Z_\nu)\otimes\B(\H)$
and $C(Z_\nu)\otimes M_d(\C)$ that maps each $f\in C(Z_\nu)\otimes\B(\H)$ to a $d\times d$ matrix 
$\rho_{\mathfrak C}(f)=[f_{ij}]_{i,j}$ of continuous maps $f_{ij}:Z_\nu\rightarrow\C$. 

Suppose now that $\psi\in L_\H^\infty(X,\nu)$ and let 
$f=\Gamma_\H(\psi)$ and $[f_{ij}]_{i,j}=\rho_{\mathfrak C}(f)
=\rho_{\mathfrak C}\circ\Gamma_\H(\psi)$. Define a subset $\Omega_{\psi, \mathfrak C}\subset M_d(\C)$ by setting
\[
\Omega_{\psi, \mathfrak C}\,=\,\left\{ [\omega(f_{ij})]_{i,j=1}^d\in M_d(\C)\,|\,\omega:\cstar\left(\{f_{ij}\}_{i,j}\right)\rightarrow\C
\mbox{ is a homomorphism}\right\}.
\]
Note that by restricting the domain of a homomorphism $\omega:C(Z_\nu)\rightarrow\C$
to the unital C$^*$-subalgebra $\cstar\left(\{f_{ij}\}_{i,j}\right)$ of $C(Z_\nu)$ we obtain an inclusion
$\Delta_{\psi, \mathfrak C}\subseteq \Omega_{\psi, \mathfrak C}$, where 
\[
\Delta_{\psi, \mathfrak C}\,=\,\left\{ [\omega(f_{ij})]_{i,j=1}^d\in M_d(\C)\,|\,\omega:C(Z_\nu)\rightarrow\C
\mbox{ is a homomorphism}\right\}.
\] 
On the other hand, because every homomorphism $\omega_0:\cstar\left(\{f_{ij}\}_{i,j}\right)\rightarrow\C$
is, by the fact that $\C$ is $1$-dimensional, irreducible, there is a homomorphism 
$\omega:C(Z_\nu)\rightarrow\C$ such that 
$\omega_{\vert \cstar\left(\{f_{ij}\}_{i,j}\right)}=\omega_0$. 
Hence, $\Delta_{\psi, \mathfrak C}= \Omega_{\psi, \mathfrak C}$.
Because homomorphisms $C(Z_\nu)\rightarrow\C$ are point evaluations $g\mapsto g(z_0)$ for 
$z_0\in Z_\nu$, the set $\Omega_{\psi, \mathfrak C}$ is the range of the matrix-valued function 
$z\mapsto[f_{ij}(z)]_{i,j=1}^d$, which we have denoted by $\rho_{\mathfrak C}\circ\Gamma_\H(\psi)$.
Thus, $\Omega_{\psi, \mathfrak C}^\sim=\spec^d\left(\rho_{\mathfrak C}\circ\Gamma_\H(\psi)\right)$~\cite[Lemma 2.3]{salinas1979}.
The Gelfand spectrum is an isomorphism invariant; hence, $\spec^d(\psi)=\Omega_{\psi, \mathfrak C}^\sim$.

Now let $v_{\mathfrak C}:\C^d\rightarrow\H$ be the unitary operator that 
sends the $j$-th coordindate vector of $\C^d$ to the unit vector $e_j\in \H$ in the 
computational basis $\mathfrak C=\{e_1,\dots,e_d\}$ of $\H$. If 
$\lambda=[\lambda_{ij}]_{i,j}\in \Omega_{\psi, \mathfrak C}$, then
there is a $z_0\in Z_\nu$ such that $\lambda_{ij}=f_{ij}(z_0)=\langle f(z_0)e_j,e_i\rangle$. Hence, 
$v_{\mathfrak C}\lambda v_{\mathfrak C}^{-1}$ is an element of the range of $f=\Gamma_\H(\nu)$, 
which coincides with the essential
range of $\psi$. As the map $s\mapsto v_{\mathfrak C}sv_{\mathfrak C}^{-1}$ is an automorphism of $\B(\H)$, 
it is also true that if $\lambda\in\Omega_{\psi, \mathfrak C}^\sim$, then 
$v_{\mathfrak C}\lambda v_{\mathfrak C}^{-1}$ is an element of the hypoconvex hull of the range of $f$.
This completes the proof that 
$\lambda\in \spec^d(\psi)$ only if there exists a unitary operator $v:\C^d\rightarrow \H$ such that 
$v\lambda v^{-1}\in\left(\essran\psi\right)^\sim$.

Conversely, 
for each choice of computational basis $\mathfrak C=\{e_1,\dots,e_d\}$ of $\H$ there is an isometry
$v_{\mathfrak C}:\C^d
\rightarrow\H$ that sends the $j$-th coordindate vector of $\C^d$ to the unit vector $e_j\in \H$.
Hence, if $v_{\mathfrak C}\lambda v_{\mathfrak C}^{-1}\in\left(\essran \psi\right)^\sim$, then
$\lambda\in \spec^d(\psi)$.
\end{proof}

The expectation $\QE{\nu}{\psi}$ of $\psi$ is just one of many operators $\varphi(\psi)\in\B(\H)$
obtained by evaluating $\psi$ at a ucp map $\varphi:L^\infty_\H(X,\nu)\rightarrow  \B(\H)$; that is,
\[
\QE{\nu}{\psi}\in \{\Phi(\psi)\,|\, \Phi:L^\infty_\H(X,\nu)\rightarrow  \B(\H)\textnormal{ is a ucp map}\}. 
\]
Theorem~\ref{sarah} below clarifies the relationship between operators of this type and the essential range of $\psi$.

\begin{theorem}\label{sarah} If $\varphi:L^\infty_\H(X,\nu)\rightarrow  \B(\H)$ is a ucp map, then for every
$\psi\in L^\infty_\H(X,\nu)$ there exist 
$x_1,\dots,x_m\in X$ (not necessarily distinct) and $t_1,\dots, t_m\in\B(\H)$ such that
\[
\varphi(\psi)\,=\,\sum_{j=1}^m t_j^*\psi(x_j)t_j \quad\mbox{ and }\quad
\sum_{j=1}^mt_j^*t_j=1\in\B(\H).
\]
\end{theorem}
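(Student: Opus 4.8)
The plan is to move into the C$^*$-algebraic picture supplied by Theorem~\ref{varineq}, to observe that only the value of the ucp map at one element is at stake, to split that value into extreme pieces, and to analyse extreme ucp maps by means of minimal Stinespring dilations.

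First I would transport the problem through the $*$-isomorphism $\Gamma_\H\colon L^\infty_\H(X,\nu)\to C(Z_\nu)\otimes\B(\H)$ from the proof of Theorem~\ref{varineq}. If $f=\Gamma_\H(\psi)\in C(Z_\nu,\B(\H))$, then the range of $f$ equals $\essran\psi$, exactly as in the proof of Theorem~\ref{reducingspectrum}, so it suffices to write $\widetilde\varphi(f)$ in the asserted form, where $\widetilde\varphi=\varphi\circ\Gamma_\H^{-1}$ is ucp. Now the set $\mathcal U$ of all ucp maps $C(Z_\nu)\otimes\B(\H)\to\B(\H)$ is convex and compact in the topology of pointwise convergence, and $\Psi\mapsto\Psi(f)$ is affine and continuous on $\mathcal U$, so $K:=\{\Psi(f)\colon\Psi\in\mathcal U\}$ is a compact convex subset of the finite-dimensional space $\B(\H)$. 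Three observations reduce the problem to the case where $\varphi$ is an extreme point of $\mathcal U$: since a compact convex subset of a finite-dimensional space is the convex hull of its extreme points, $\widetilde\varphi(f)$ is a finite convex combination of extreme points of $K$; a convex combination of operators of the form $\sum_jt_j^*\lambda_jt_j$ with $\lambda_j\in\essran\psi$ and $\sum_jt_j^*t_j=1$ is again of this form after the weights are absorbed into the $t_j$; and if $y$ is an extreme point of $K$ then any extreme point $\Phi$ of the compact convex set $\{\Psi\in\mathcal U\colon\Psi(f)=y\}$ is extreme in $\mathcal U$ and has $\Phi(f)=y$.

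Assume then that $\varphi$ is extreme, and fix a minimal Stinespring dilation $\varphi(\,\cdot\,)=V^*\pi(\,\cdot\,)V$, with $\pi\colon C(Z_\nu)\otimes\B(\H)\to\B(\mathcal K)$ a unital $*$-representation and $V\colon\H\to\mathcal K$ an isometry. Since every representation of the matrix algebra $\B(\H)$ is a multiple of the identity representation, we may write $\mathcal K\cong\C^d\otimes\mathfrak M$ with $\pi(1\otimes a)=a\otimes 1$ for a Hilbert space $\mathfrak M$; the central subalgebra $C(Z_\nu)\otimes 1$ then maps into $1\otimes\B(\mathfrak M)$ and determines a spectral measure $E$ on $Z_\nu$ with $\pi(g\otimes a)=a\otimes E(g)$, whence $\pi(C(Z_\nu)\otimes\B(\H))'=1\otimes\mathcal C$, where $\mathcal C$ is the commutant of the range of $E$. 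By Arveson's characterization of the extreme unital completely positive maps — for a minimal dilation, extremality is equivalent to injectivity of $T\mapsto V^*TV$ on $\pi(C(Z_\nu)\otimes\B(\H))'$ — the linear map $C\mapsto V^*(1\otimes C)V$ is injective on $\mathcal C$; since its codomain $\B(\H)$ has dimension $d^2$, the algebra $\mathcal C$, and hence the abelian von Neumann algebra generated by the range of $E$, is finite-dimensional. Consequently $E$ is concentrated at finitely many points: there are $z_1,\dots,z_m\in Z_\nu$ and mutually orthogonal projections $Q_1,\dots,Q_m$ on $\mathfrak M$ with $\sum_iQ_i=1$ and $E(g)=\sum_ig(z_i)Q_i$ for every $g\in C(Z_\nu)$. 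Expanding $f$ in matrix units, $f=\sum_{k,l}f_{kl}\otimes E_{kl}$, and using $\pi(g\otimes a)=a\otimes E(g)$, we get $\pi(f)=\sum_if(z_i)\otimes Q_i$; setting $V_i=(1\otimes Q_i)V$, noting that $1\otimes Q_i$ commutes with $f(z_i)\otimes 1$, and expanding each $V_i$ along an orthonormal basis of the finite-dimensional space $Q_i\mathfrak M$, we obtain operators $t_{i\beta}\in\B(\H)$ with $\varphi(f)=\sum_iV_i^*\bigl(f(z_i)\otimes 1\bigr)V_i=\sum_{i,\beta}t_{i\beta}^*\,f(z_i)\,t_{i\beta}$ and $\sum_{i,\beta}t_{i\beta}^*t_{i\beta}=\sum_iV^*(1\otimes Q_i)V=V^*V=1$.

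Reassembling, $\varphi(\psi)=\widetilde\varphi(f)$ is displayed as a finite sum $\sum_jt_j^*\lambda_jt_j$ with $\sum_jt_j^*t_j=1$ and each $\lambda_j$ of the form $f(z_i)\in\essran\psi$, and choosing $x_j\in X$ with $\psi(x_j)=\lambda_j$ finishes the proof. I expect the third paragraph to be where the real work lies, and hence the main obstacle: showing that the commutant of a minimal Stinespring dilation of an extreme ucp map into the finite factor $\B(\H)$ is finite-dimensional, and deducing from this the finite point-support of $E$. A secondary point that needs a little care is the last step, replacing an essential-range value of $\psi$ by a genuine value $\psi(x_j)$; this is harmless once a suitable representative of the class $\psi$ has been fixed.
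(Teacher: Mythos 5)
Your argument is correct, but it takes a genuinely different route from the paper's. The paper's proof stays at the level of the matricial range: viewing $\psi$ as a $d$-normal element, it observes that $\left[\langle\varphi(\psi)e_j,e_i\rangle\right]_{i,j}$ lies in the matricial range $W^d(\psi)$, invokes the Bunce--Salinas theorem \cite{bunce--salinas1976} to write this as $\sum_j s_j^*\lambda_j s_j$ with $\lambda_j\in\spec^d(\psi)$ and $\sum_j s_j^*s_j=1$, and then uses Theorem~\ref{reducingspectrum} to replace each $\lambda_j$ by an element of $(\essran\psi)^\sim$, whose hypoconvex decomposition $\sum_i\psi(x_i^{(j)})p_i^{(j)}$ supplies the points and the operators $t_\ell$. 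You instead re-derive the relevant special case of Bunce--Salinas from scratch: a face argument in the finite-dimensional compact convex set $K=\{\Psi(f)\}$ reduces to extreme ucp maps, and Arveson's extremality criterion (injectivity of $T\mapsto V^*TV$ on $\pi(A)'$ for a minimal dilation --- do cite this precisely, as it is the one external ingredient) forces the commutant, hence the abelian algebra generated by the spectral measure, to be finite-dimensional, so the dilation is supported on finitely many points of $Z_\nu$ and the decomposition drops out. The paper's route buys brevity and reuse of Theorem~\ref{reducingspectrum}, which is needed anyway; your route buys self-containedness and a quantitative bonus, since $\sum_i(\dim Q_i\mathfrak M)^2\le d^2$ combined with Carath\'eodory gives an explicit bound on the number of terms $m$. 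The step you flag at the end --- passing from a value $f(z_i)\in\essran\psi$ to an honest value $\psi(x_j)$ for an actual point $x_j\in X$ --- is indeed the only delicate point, but the paper's own proof elides exactly the same issue when it writes elements of $(\essran\psi)^\sim$ as $\sum_i\psi(x_i^{(j)})p_i^{(j)}$, so your treatment is at the same level of rigour.
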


\begin{proof} Because $L^\infty_\H(X,\nu)$ is isomorphic to 
$L^\infty(X,\mu)\otimes M_d(\C)$, every $\psi\in L^\infty_\H(X,\nu)$ may be represented as a 
$d\times d$ matrix whose entries are taken from an abelian C$^*$-algebra; that is,
$\psi$ is $d$-normal. Select an orthonormal basis $\{e_1,\dots,e_d\}$ of $\H$ and
consider the $d\times d$ matrix $\omega=\left[ \langle\varphi(\psi)e_j,e_i\rangle\right]_{i,j=1}^d$.
Thus, $\omega$ is an element of the matricial range $W^d(\psi)$, namely the set of all
$d\times d$ matrices of the form $\Phi(\psi)$ for some ucp map 
$\Phi:L^\infty_\H(X,\nu)\rightarrow M_d(\C)$. By~\cite[Theorem~3.9]{bunce--salinas1976},
there are $\lambda_1,\dots,\lambda_q\in\spec^d(\psi)$   
(not necessarily distinct) and $s_1,\dots, s_q\in M_d(\C)$ such that $\omega=\sum_j s_j^*\lambda_j s_j$ and
$\sum_js_j^*s_j=1$. By Theorem~\ref{reducingspectrum},
there exist unitaries $v_j:\C^d\rightarrow \H$ such that 
$v_j\lambda_j v_j^*\in\left(\essran \psi\right)^\sim$. For each $j=1,\dots, q$
there are $x_1^{(j)},\dots,x_{n_j}^{(j)}\in X$ and pairwise-orthogonal projections $p_1^{(j)},\dots,p_{n_j}^{(j)}\in \B(\H)$
such that each $p_i^{(j)}$ commutes with $\psi(x_i^{(j)})$ and $\displaystyle v_j\lambda_j v_j^{*}=\sum_{i=1}^{n_j}\psi(x_i^{(j)})p_i^{(j)}$.
Let $u:\C^d\rightarrow\H$ be the unitary for which $u^*zu=[\langle ze_j,e_i\rangle]_{i,j}$ for all $z\in\B(\H)$. 
In particular, $\omega=u^*\varphi(\psi)u$ and each $s_j=u^*r_ju$ for a unique $r_j\in\B(\H)$. Thus,
\begin{align*}
\varphi(\psi)=\displaystyle\sum_{j=1}^q\sum_{i=1}^{n_j} us_j^*\lambda_j s_ju^* 
&= \displaystyle\sum_{j=1}^q\sum_{i=1}^{n_j} (p_i^{(j)}u^*r_j)^* \psi(x_i^{(j)}) (p_i^{(j)}u^*r_j) \\
&= \displaystyle\sum_{\ell=1}^m t_\ell^*\psi(x_\ell)t_\ell,
\end{align*}
where $\{t_\ell\}_\ell=\{ p_i^{(j)}u^*r_j\}_{i,j}$ and $\{x_\ell\}_\ell=\{x_{i}^{(j)}\}_{i,j}$ are relabelings and
renumberings of the operators and points in the decomposition of $\varphi(\psi)$ above.
\end{proof}

Recall that a subset $K\subseteq\B(\H)$ is \emph{C$^*$-convex} if 
\(
\displaystyle \sum_{j=1}^m t_j^*z_jt_j\in K\,,
\)
for every $z_1,\dots, z_m\in K$ and $t_1,\dots, t_m\in\B(\H)$ with $\displaystyle \sum_{j=1}^mt_j^*t_j=1\in\B(\H)$.
If $S\subset\B(\H)$ is a nonempty set, then $\cstarconv(S)$ is the smallest C$^*$-convex set 
that contains $S$. Thus, Theorem~\ref{sarah} leads immediately to the following corollary which is a generalisation of~\cite[Theorem~2.3(8)]{farenick--kozdron2012}.

\begin{corollary}\label{sarah-cor} If $\psi\in L^\infty_\H(X,\nu)$, then
\[
\{\Phi(\psi)\,|\, \Phi:L^\infty_\H(X,\nu)\rightarrow  \B(\H)\textnormal{ is a ucp map}\}
=\cstarconv(\essran \psi).
\]
In particular, $\QE{\nu}{\psi}\in \cstarconv(\essran \psi)$.
\end{corollary}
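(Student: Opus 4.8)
The plan is to prove the asserted set equality by establishing the two inclusions separately, and then to read off the statement about $\QE{\nu}{\psi}$ at the end. Throughout, write $S=\{\Phi(\psi)\mid \Phi:L^\infty_\H(X,\nu)\to\B(\H)\textnormal{ is a ucp map}\}$, so that the goal is $S=\cstarconv(\essran\psi)$.

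For the inclusion $S\subseteq\cstarconv(\essran\psi)$, I would first replace $\psi$ by a representative of its class in $L^\infty_\H(X,\nu)$ all of whose values lie in $\essran\psi$; this is legitimate because $\{x\in X\mid\psi(x)\notin\essran\psi\}$ is $\mu$-null and $\essran\psi$ does not depend on the chosen representative. Given an arbitrary ucp map $\varphi$, Theorem~\ref{sarah} produces points $x_1,\dots,x_m\in X$ and operators $t_1,\dots,t_m\in\B(\H)$ with $\sum_{j=1}^m t_j^*t_j=1$ and $\varphi(\psi)=\sum_{j=1}^m t_j^*\psi(x_j)t_j$. Since each $\psi(x_j)$ now lies in $\essran\psi\subseteq\cstarconv(\essran\psi)$ and the latter set is $\cstar$-convex by definition, the operator $\varphi(\psi)$ belongs to $\cstarconv(\essran\psi)$; as $\varphi$ was arbitrary, $S\subseteq\cstarconv(\essran\psi)$.

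For the reverse inclusion it suffices to check that $S$ is itself a $\cstar$-convex set containing $\essran\psi$, since $\cstarconv(\essran\psi)$ is the smallest such set. To see $\essran\psi\subseteq S$, recall from the proof of Theorem~\ref{reducingspectrum} that, under the Gelfand isomorphism $\Gamma_\H:L^\infty_\H(X,\nu)\to C(Z_\nu)\otimes\B(\H)$ of Theorem~\ref{varineq}, the function $f=\Gamma_\H(\psi)\in C(Z_\nu)\otimes\B(\H)\cong C(Z_\nu,\B(\H))$ has range exactly $\essran\psi$. Hence for $\lambda\in\essran\psi$ there is $z_0\in Z_\nu$ with $f(z_0)=\lambda$, and the composition $\Phi=(\mathrm{ev}_{z_0}\otimes\mathrm{id}_{\B(\H)})\circ\Gamma_\H$ is a unital $*$-homomorphism, in particular a ucp map, with $\Phi(\psi)=f(z_0)=\lambda$; thus $\lambda\in S$. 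For $\cstar$-convexity of $S$, suppose $z_k=\Phi_k(\psi)\in S$ for ucp maps $\Phi_k$ and $t_1,\dots,t_m\in\B(\H)$ satisfy $\sum_{k=1}^m t_k^*t_k=1$. The map $\Phi:a\mapsto\sum_{k=1}^m t_k^*\Phi_k(a)t_k$ is unital, since $\Phi(1)=\sum_k t_k^*t_k=1$, and completely positive, being a sum of compositions of the ucp maps $\Phi_k$ with the completely positive maps $a\mapsto t_k^*at_k$; moreover $\Phi(\psi)=\sum_{k=1}^m t_k^*z_kt_k$, so this element lies in $S$. Therefore $\cstarconv(\essran\psi)\subseteq S$, giving the desired equality. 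Finally, $\E_\nu$ is a ucp map by Theorem~\ref{varineq}, so $\QE{\nu}{\psi}\in S=\cstarconv(\essran\psi)$.

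I do not expect a genuine obstacle in this argument; the only two points that need a little care are the passage to a representative of $\psi$ whose values all lie in $\essran\psi$ (so that the points supplied by Theorem~\ref{sarah} contribute essential-range values to the $\cstar$-convex combination), and the use of the fact — already obtained inside the proof of Theorem~\ref{reducingspectrum} — that the range of the Gelfand transform of $\psi$ coincides with $\essran\psi$, which is what lets point evaluations realise every element of $\essran\psi$ as $\Phi(\psi)$ for a ucp $\Phi$.
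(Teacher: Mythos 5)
Your argument is correct and follows the route the paper intends: the inclusion $\{\Phi(\psi)\}\subseteq\cstarconv(\essran\psi)$ is exactly the ``immediate'' consequence of Theorem~\ref{sarah} that the paper invokes (your care in choosing a representative with values in $\essran\psi$ is the right way to make that rigorous), and the reverse inclusion, which the paper leaves unstated, is correctly supplied by your observation that the set of ucp images of $\psi$ is C$^*$-convex and contains $\essran\psi$ via point evaluations through $\Gamma_\H$. No gaps.
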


\section{Quantum Variance}

 \begin{defn}\label{variance} If $\psi \in L^\infty_\H(X,\nu)$ is a quantum random variable, then
\begin{enumerate}
\item the \define{left variance} of $\psi$ with respect to $\nu$
is the operator 
\[
\lvar{\nu}{\psi}=   \QE{\nu}{\psi^*\psi}- \QE{\nu}{\psi^*}\QE{\nu}{\psi},
\]
\item the \define{right variance} of $\psi$ with respect to $\nu$
is the operator 
\[
\rvar{\nu}{\psi}=   \QE{\nu}{\psi\psi^*}- \QE{\nu}{\psi}\QE{\nu}{\psi^*},
\]
and
\item the \define{variance} of $\psi$ with respect to $\nu$
is the operator 
\[
\var{\nu}{\psi}=\frac{1}{2}\left( {\lvar{\nu}{\psi}} +  {\rvar{\nu}{\psi}}  \right).
\]
\end{enumerate}
\end{defn}

The Schwarz inequality ensures that all three of the variances defined above are positive operators. However, this occurrence of positivity
is a consequence of the fact that $\psi$ is essentially bounded and that $L_\H^{\infty}(X,\nu)$ is a von Neumann algebra.
In contrast, variance is defined classically for square-integrable random variables rather than essentially bounded random variables, as it
is a result of Chebyshev's inequality that square-integrable random variables are necessarily integrable (i.e., $L^2 \subset L^1$).
To similarly define the variance of an arbitrary quantum random variable $\psi$, it is necessary to fulfil the \emph{second moment condition} 
that  $\psi^*\psi$ be $\nu$-integrable.  The obvious question is whether or not the second moment condition implies that $\psi$  is itself $\nu$-integrable;
the theorem below answers this question affirmatively. 

\begin{theorem}\label{cheb}
Suppose that $\psi: X \to \B(\H)$ is a quantum random variable. If $\psi^*\psi$ is $\nu$-integrable, then $\psi$ is $\nu$-integrable.  
\end{theorem}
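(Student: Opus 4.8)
The plan is to obtain the $\nu$-integrability of $\psi$ from that of $\psi^*\psi$ by invoking the Cauchy--Schwarz inequality twice: once pointwise in $x\in X$ at the level of the trace, and once at the level of the integral over $(X,\mu)$. Write $g=\frac{\dd\nu}{\dd\mu}$ for the principal Radon-Nikod\'ym derivative of $\nu$, so that $g(x)\in\B(\H)_+$ for $\mu$-almost every $x$, and fix a density operator $\rho$. The first step is a purely formal rewriting: by cyclicity of the trace, for $\mu$-almost every $x$,
\[
\psi_\rho(x)=\tr\left(\rho\, g(x)^{1/2}\psi(x)g(x)^{1/2}\right)=\tr\left(\sigma_\rho(x)\psi(x)\right)
\qquad\text{and}\qquad
(\psi^*\psi)_\rho(x)=\tr\left(\sigma_\rho(x)\psi(x)^*\psi(x)\right),
\]
where $\sigma_\rho(x)=g(x)^{1/2}\rho\, g(x)^{1/2}$ is a positive operator on $\H$.

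The second step is the pointwise estimate. The map $(a,b)\mapsto\tr\left(\sigma_\rho(x)\,a^*b\right)$ is a positive semidefinite sesquilinear form on $\B(\H)$, since $\tr\left(\sigma_\rho(x)\,a^*a\right)=\tr\left((a\,\sigma_\rho(x)^{1/2})^*(a\,\sigma_\rho(x)^{1/2})\right)\ge 0$. Applying the Cauchy--Schwarz inequality to this form with the pair of arguments $a=1$ and $b=\psi(x)$ gives, for $\mu$-almost every $x$,
\[
\left|\psi_\rho(x)\right|=\left|\tr\left(\sigma_\rho(x)\psi(x)\right)\right|\le\left(\tr\sigma_\rho(x)\right)^{1/2}\left(\tr\left(\sigma_\rho(x)\psi(x)^*\psi(x)\right)\right)^{1/2}=\left(\tr\sigma_\rho(x)\right)^{1/2}\left((\psi^*\psi)_\rho(x)\right)^{1/2}.
\]
The point of this bound is that both functions under the square roots are nonnegative and $\mu$-integrable. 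Indeed, by~\eqref{rn defn} with $E=X$, the function $x\mapsto\tr\sigma_\rho(x)=\tr\left(\rho\, g(x)\right)$ is $\mu$-integrable with $\int_X\tr\left(\rho\, g(x)\right)\d\mu(x)=\tr\left(\rho\,\nu(X)\right)=\tr\rho=1$ (using that $\nu$ is a quantum probability measure, so $\nu(X)=1$), while $x\mapsto(\psi^*\psi)_\rho(x)\ge 0$ is $\mu$-integrable precisely because $\psi^*\psi$ is assumed $\nu$-integrable.

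The last step is the Cauchy--Schwarz (H\"older with $p=q=2$) inequality for the integral: since the square roots of two nonnegative $\mu$-integrable functions lie in $L^2(X,\mu)$,
\[
\int_X\left|\psi_\rho(x)\right|\d\mu(x)\le\left(\int_X\tr\left(\rho\, g(x)\right)\d\mu(x)\right)^{1/2}\left(\int_X(\psi^*\psi)_\rho(x)\d\mu(x)\right)^{1/2}=\left(\int_X(\psi^*\psi)_\rho(x)\d\mu(x)\right)^{1/2}<\infty.
\]
Since $\psi$ and $g$ are quantum random variables, their coordinate functions, and hence the products and traces built from them, are measurable, so $\psi_\rho$ is a measurable function; with the bound above this gives $\psi_\rho\in L^1(X,\mu)$. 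As $\rho$ was an arbitrary density operator, $\psi$ is $\nu$-integrable by Definition~\ref{nuintdefn}. I do not expect a serious obstacle here; the only points needing care are the routine check that the trace form is positive semidefinite (so that the pointwise Cauchy--Schwarz step is legitimate) and the consistent handling of the $\mu$-almost-everywhere qualifications attached to $g$ being a positive operator and to $\psi$ being determined only modulo $\mu$-null sets.
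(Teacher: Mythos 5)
Your proof is correct, and it reaches the same pointwise goal as the paper's proof --- a bound on $|\psi_\rho(x)|$ by $\mu$-integrable functions built from $(\psi^*\psi)_\rho$ and $\iota_\rho(x)=\tr\bigl(\rho\,\frac{\dd\nu}{\dd\mu}(x)\bigr)$ --- but by a genuinely different mechanism. The paper cites two tracial inequalities from the literature, namely Gardner's $|\tr(y)|\leq\tr(|y|)$ and the Bhatia--Kittaneh inequality $\tr(|yz^*|)\leq\frac{1}{2}\tr(|y|^2)+\frac{1}{2}\tr(|z|^2)$, applied to $y=\psi(x)w(x)$ and $z=w(x)$ with $w(x)=\bigl(\frac{\dd\nu}{\dd\mu}(x)\bigr)^{1/2}\rho^{1/2}$, to get the arithmetic-mean bound $|\psi_\rho(x)|\leq\frac{1}{2}(\psi^*\psi)_\rho(x)+\frac{1}{2}\iota_\rho(x)$ and then concludes by direct domination. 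You instead observe that $(a,b)\mapsto\tr\bigl(\sigma_\rho(x)a^*b\bigr)$ is a positive semidefinite sesquilinear form and apply the elementary Cauchy--Schwarz inequality with $a=1$, $b=\psi(x)$, obtaining the geometric-mean bound $|\psi_\rho(x)|\leq\bigl(\iota_\rho(x)\bigr)^{1/2}\bigl((\psi^*\psi)_\rho(x)\bigr)^{1/2}$, and then finish with H\"older on $L^2(X,\mu)$. Your pointwise estimate is sharper (it implies the paper's via the AM--GM inequality, so you could even skip the integral Cauchy--Schwarz step and conclude by domination exactly as the paper does), and your argument is self-contained, needing no external trace inequalities. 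You also explicitly address the measurability of $\psi_\rho$ and the almost-everywhere caveats, which the paper leaves implicit. Both proofs are valid; yours trades the citation of two named inequalities for one routine verification that the trace form is positive semidefinite.
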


\begin{proof}
Let $\rho$ be a density operator and consider the functions $(\psi^*\psi)_\rho$ and $\psi_\rho$ on $X$ defined by 
\[
\begin{array}{rcl}
(\psi^*\psi)_\rho(x)&=&\tr\left(\rho^{1/2}\,\left(\displaystyle\frac{\dd\nu}{\dd\mu}(x)\right)^{1/2}[\psi^*\psi(x)]\left(\displaystyle\frac{\dd\nu}{\dd\mu}(x)\right)^{1/2}\rho^{1/2}\right),\textnormal{ and}\\ && \\
\psi_\rho(x)&=&\tr\left(\rho^{1/2}\,\left(\displaystyle\frac{\dd\nu}{\dd\mu}(x)\right)^{1/2}\psi(x)\left(\displaystyle\frac{\dd\nu}{\dd\mu}(x)\right)^{1/2}\rho^{1/2}\right),
\end{array}
\]
which coincide with Definition~\ref{nuintdefn} using elementary properties of the trace.
Let us also define, using the constant function $\iota(x)=1\in\B(\H)$, the scalar-valued function 
\[
\iota_\rho(x)\,=\,\tr\left(\rho\,\displaystyle\frac{\dd\nu}{\dd\mu}(x)\right).
\]
Note that $\iota_\rho$ is $\mu$-integrable. To complete the proof we shall require the following two tracial inequalities for arbitrary $y,z\in \B(\H)$:
\begin{enumerate}
\item\label{g} (\cite[Theorem~1]{gardner1979}) $|\tr(y)|\leq\tr(|y|)$, and
\item\label{a} (\cite[Theorem~1]{bhatiakittaneh}) $\displaystyle\tr(|yz^*|)\leq\frac{1}{2}\tr(|y|^2)+ \frac{1}{2}\tr(|z|^2)$.
\end{enumerate}
Consider the function $w:X\rightarrow\B(\H)$ defined by
$w(x)=\displaystyle\left(\frac{\dd\nu}{\dd\mu}(x)\right)^{1/2}\rho^{1/2}$. 
The two tracial inequalities above imply that
\begin{align*}
|&\psi_\rho(x)|=\left|\tr\left(w(x)^*\psi(x)w(x)\right)\right|  
= \left|\tr\left([\psi(x)w(x)]w^*(x)\right)\right|\\
&\leq  \tr\left(\left|[\psi(x)w(x)]w^*(x)\right|\right) 
\leq \frac{1}{2}\tr(|\psi(x)w(x)|^2) +\frac{1}{2}\tr(|w(x)|^2) \\
&= \frac{1}{2} \tr(w(x)^*\psi(x)^*\psi(x)w(x)) +\frac{1}{2}\tr(w^*(x)w(x))
= \frac{1}{2}\psi^*\psi_\rho(x) + \frac{1}{2}\iota_\rho (x).
\end{align*}
Thus, $|\psi_\rho|$ is bounded above by the average of the two nonnegative $\mu$-integrable functions
$(\psi^*\psi)_\rho$ and $\iota_\rho$. Hence, $\psi_\rho\in L^1(X,\mu)$. As this is true for every density operator $\rho$, 
we deduce that $\psi$ is $\nu$-integrable.
\end{proof}

\begin{corollary}\label{extended var} The three variances in Definition~\ref{variance} can be defined for
quantum random variables $\psi$ for which $\psi^*\psi$ is $\nu$-integrable
\end{corollary}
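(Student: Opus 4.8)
The plan is to treat this as a direct unpacking of Definition~\ref{variance} in light of Theorem~\ref{cheb}. First I would observe that each of the three operators $\lvar{\nu}{\psi}$, $\rvar{\nu}{\psi}$, and $\var{\nu}{\psi}$ is an operator polynomial in the four quantum expectations $\QE{\nu}{\psi}$, $\QE{\nu}{\psi^*}$, $\QE{\nu}{\psi^*\psi}$, and $\QE{\nu}{\psi\psi^*}$: the left variance uses the first three, the right variance uses $\QE{\nu}{\psi\psi^*}$, $\QE{\nu}{\psi}$, $\QE{\nu}{\psi^*}$, and $\var{\nu}{\psi}=\tfrac12\bigl(\lvar{\nu}{\psi}+\rvar{\nu}{\psi}\bigr)$ is their average. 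So the corollary reduces to the single assertion that each of the quantum random variables $\psi$, $\psi^*$, $\psi^*\psi$, $\psi\psi^*$ is $\nu$-integrable whenever $\psi^*\psi$ is, and the whole proof is just the verification of these four integrability statements.

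Next I would dispatch them in order. The operator $\QE{\nu}{\psi^*\psi}$ exists by the standing hypothesis. For $\QE{\nu}{\psi}$, Theorem~\ref{cheb} applies verbatim: $\nu$-integrability of $\psi^*\psi$ forces that of $\psi$. For $\QE{\nu}{\psi^*}$, I would record the adjoint symmetry of the integrability condition of Definition~\ref{nuintdefn}: writing $D(x)=\frac{\dd\nu}{\dd\mu}(x)$ and taking the adjoint inside the trace, one gets, for every density operator $\rho$,
$(\psi^*)_\rho(x)=\tr\!\left(\rho^{1/2}D(x)^{1/2}\psi(x)^*D(x)^{1/2}\rho^{1/2}\right)=\overline{\psi_\rho(x)}$,
using only $\tr(y^*)=\overline{\tr(y)}$ and the self-adjointness of $\rho$ and of $D(x)$; hence $\psi_\rho\in L^1(X,\mu)$ if and only if $(\psi^*)_\rho\in L^1(X,\mu)$, so $\psi^*$ is $\nu$-integrable. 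At this point $\lvar{\nu}{\psi}$ is well-defined. Finally, for $\QE{\nu}{\psi\psi^*}$ — which is the one term still needed for $\rvar{\nu}{\psi}$, and hence for $\var{\nu}{\psi}$ — I would repeat the argument of this paragraph with $\psi^*$ in the role of $\psi$, the associated second-moment quantum random variable being $(\psi^*)^*\psi^*=\psi\psi^*$.

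I do not expect a genuine obstacle here. The only step that is not pure bookkeeping is the trace identity $(\psi^*)_\rho=\overline{\psi_\rho}$, which is the one-line computation indicated above, and the real analytic content — that the second moment condition forces $\nu$-integrability of $\psi$ itself in the operator-valued setting — has already been carried out in Theorem~\ref{cheb} (this being precisely the operator analogue of the classical inclusion $L^2\subseteq L^1$ extracted from Chebyshev's inequality, as remarked just before that theorem). Assembling the four integrability statements shows that every operator occurring in Definition~\ref{variance} is defined, which is the corollary.
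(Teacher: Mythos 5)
Your reduction of the corollary to four integrability statements is sensible and is in fact more explicit than the paper, which offers no proof at all and treats the corollary as an immediate consequence of Theorem~\ref{cheb}. Three of your four steps are fine: $\QE{\nu}{\psi^*\psi}$ exists by hypothesis, Theorem~\ref{cheb} gives $\QE{\nu}{\psi}$, and the adjoint identity $(\psi^*)_\rho=\overline{\psi_\rho}$ correctly gives $\QE{\nu}{\psi^*}$, so $\lvar{\nu}{\psi}$ is defined. The gap is the fourth step. Applying Theorem~\ref{cheb} ``with $\psi^*$ in the role of $\psi$'' is an implication whose \emph{hypothesis} is that $(\psi^*)^*\psi^*=\psi\psi^*$ is $\nu$-integrable and whose conclusion is that $\psi^*$ is; it cannot produce the $\nu$-integrability of $\psi\psi^*$, which is precisely what you still need for $\rvar{\nu}{\psi}$ and hence for $\var{\nu}{\psi}$. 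Nothing you have established yields it: Theorem~\ref{cheb} passes from second moments down to first moments, and the adjoint symmetry relates $\psi$ to $\psi^*$, not $\psi^*\psi$ to $\psi\psi^*$.

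Moreover, the missing implication is genuinely false, so this step cannot be patched as written. Take $X=(0,1]$ with Lebesgue measure $m$, $d=2$, and $\nu(E)=\int_E\operatorname{diag}(1,2x)\,\dd m(x)$, so that $\nu(X)=1$, $\dd\mu=\tfrac{1}{2}(1+2x)\,\dd m$, and $\frac{\dd\nu}{\dd\mu}(x)=\operatorname{diag}\!\left(\tfrac{2}{1+2x},\tfrac{4x}{1+2x}\right)$. Let $\psi(x)$ be the matrix with $x^{-1/2}$ in the $(1,2)$ entry and zeros elsewhere. Then $(\psi^*\psi)_\rho(x)=\rho_{22}\cdot\tfrac{4}{1+2x}$ is bounded, so $\psi^*\psi$ is $\nu$-integrable; but $(\psi\psi^*)_\rho(x)=\rho_{11}\cdot\tfrac{2}{x(1+2x)}$, and $\int_X(\psi\psi^*)_\rho\,\dd\mu=\rho_{11}\int_0^1 x^{-1}\,\dd x=\infty$ whenever $\rho_{11}>0$. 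Thus under the literal hypothesis your argument (correctly) defines only the left variance; to obtain $\rvar{\nu}{\psi}$ and $\var{\nu}{\psi}$ one must read the ``second moment condition'' symmetrically, i.e.\ assume that $\psi\psi^*$ is also $\nu$-integrable, after which your scheme---Theorem~\ref{cheb} applied to both $\psi$ and $\psi^*$, plus the adjoint identity---does go through. You should flag this explicitly rather than present the fourth step as a formal repetition of the first three.
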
 

Notwithstanding the extension of the variance domains as indicated in Corollary~\ref{extended var}, it is not necessarily true
that the left or right variance is positive. In other words, there is no natural analogue of the Schwarz inequality from essentially
bounded quantum random variables to square-integrable quantum random variables.

We now turn our attention to essentially bounded
quantum random variables having variance zero. Although random variables having variance zero 
are trivially constant in the classical case, we will show that a much richer structure exists for quantum random 
variables having variance zero. 

One family of quantum random variables that have variance zero is the following. For $z\in\B(\H)$, let $\psi_z:X\rightarrow\B(\H)$ 
denote the constant function defined by $\psi_z(x)=z$ for every $x\in X$. Because
of the noncommutativity of operator algebra, quantum averaging of $z$, by way of $z\mapsto \QE{\nu}{\psi_z}$, 
may in fact alter $z$.  This phenomenon was observed in~\cite[Theorem 2.3(8)]{farenick--kozdron2012}, where it was shown that, 
in general, one only has $\QE{\nu}{\psi_z} \in \cstarconv(\{z\})$.  However, if $\QE{\nu}{\psi_z}=z$, 
namely if quantum averaging does not disturb $z$, then the variance of $\psi_z$ is zero; this is the 
immediate analogue of the fact that scalars (i.e., constant random variables) have variance zero.

\begin{proposition} If $ \QE{\nu}{\psi_z}=z$, then $\var{\nu}{\psi_z}=0$.
\end{proposition}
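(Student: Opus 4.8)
The plan is to show that \emph{both} one-sided variances $\lvar{\nu}{\psi_z}$ and $\rvar{\nu}{\psi_z}$ are zero, and then average. Since the involution and multiplication on $L^\infty_\H(X,\nu)$ are computed pointwise, $\psi_z^*=\psi_{z^*}$ and hence $\psi_z^*\psi_z=\psi_{z^*z}$ and $\psi_z\psi_z^*=\psi_{zz^*}$ are again (bounded) constant quantum random variables. Quantum expectation is $*$-preserving, being completely positive by Theorem~\ref{varineq}, so the hypothesis $\QE{\nu}{\psi_z}=z$ also gives $\QE{\nu}{\psi_z^*}=z^*$; substituting into Definition~\ref{variance} yields
\[
\lvar{\nu}{\psi_z}=\QE{\nu}{\psi_{z^*z}}-z^*z \qquad\text{and}\qquad \rvar{\nu}{\psi_z}=\QE{\nu}{\psi_{zz^*}}-zz^* .
\]
By the Schwarz inequality for the completely positive map $\QE{\nu}{\cdot}$ we have $\QE{\nu}{\psi_z^*\psi_z}\ge\QE{\nu}{\psi_z}^*\QE{\nu}{\psi_z}$, so each of the two operators displayed above is positive; it therefore suffices to prove that each has trace zero.

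The heart of the argument is the claim that quantum expectation is trace-preserving on constants: $\tr\big(\QE{\nu}{\psi_w}\big)=\tr(w)$ for every $w\in\B(\H)$. I would prove this by evaluating the defining identity of Definition~\ref{nuintdefn}(2) at the maximally mixed density operator $\rho=\tfrac1d 1$. A short computation using cyclicity of the trace collapses the integrand of Definition~\ref{nuintdefn}(1) for this $\rho$ to $x\mapsto\tfrac1d\tr\!\big(\tfrac{\dd\nu}{\dd\mu}(x)\,w\big)$, so that $\tfrac1d\tr\big(\QE{\nu}{\psi_w}\big)=\tfrac1d\int_X\tr\!\big(\tfrac{\dd\nu}{\dd\mu}(x)\,w\big)\d\mu(x)$; and the Radon-Nikod\'ym identity~\eqref{rn defn}, taken with $E=X$ and extended from density operators to all of $\B(\H)$ by linearity, identifies the integral on the right as $\tr\big(w\,\nu(X)\big)=\tr(w)$, because $\nu(X)=1$. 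Applying this with $w=z^*z$ and $w=zz^*$ gives $\tr\big(\lvar{\nu}{\psi_z}\big)=\tr(z^*z)-\tr(z^*z)=0$ and likewise $\tr\big(\rvar{\nu}{\psi_z}\big)=0$. A positive operator of trace zero is zero, so $\lvar{\nu}{\psi_z}=\rvar{\nu}{\psi_z}=0$ and therefore $\var{\nu}{\psi_z}=\tfrac12\big(\lvar{\nu}{\psi_z}+\rvar{\nu}{\psi_z}\big)=0$.

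The only nonroutine step is the trace-preservation claim; it rests on the fact that $\tfrac{\dd\nu}{\dd\mu}$ enters the expectation symmetrically, through its square root $\big(\tfrac{\dd\nu}{\dd\mu}\big)^{1/2}$ on both sides, so that inserting the maximally mixed state and using cyclicity of the trace recovers $\tr\big(\tfrac{\dd\nu}{\dd\mu}(x)\,w\big)$. Everything else is a matter of unwinding Definitions~\ref{variance} and~\ref{nuintdefn} and invoking the complete positivity of $\QE{\nu}{\cdot}$ already established in Theorem~\ref{varineq}; in particular one genuinely uses that each one-sided variance is a \emph{positive} operator, and not merely that its trace vanishes. (Conceptually, the statement records that $z$ lies in the multiplicative domain of the restriction of $\QE{\nu}{\cdot}$ to constant functions, which is a unital trace-preserving completely positive map, and fixed points of such a map always do.)
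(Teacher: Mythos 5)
Your proof is correct, but it takes a genuinely different route from the paper's. The paper disposes of the proposition in two lines by citing \cite[Theorem 2.3(8)]{farenick--kozdron2012}, which asserts that the fixed-point set $\{y\in\B(\H)\,:\,\QE{\nu}{\psi_y}=y\}$ is a unital C$^*$-subalgebra of $\B(\H)$; since $\psi_z^*\psi_z=\psi_{z^*z}$ and $\psi_z\psi_z^*=\psi_{zz^*}$, membership of $z$ in that subalgebra immediately forces both one-sided variances to vanish. You instead rederive from first principles exactly the fragment of the cited result that is needed: the map $w\mapsto\QE{\nu}{\psi_w}$ is unital and completely positive by Theorem~\ref{varineq}, and --- as you verify by evaluating Definition~\ref{nuintdefn} at the maximally mixed state, using cyclicity of the trace, and invoking \eqref{rn defn} with $E=X$ --- trace-preserving, so the Schwarz inequality makes each one-sided variance a positive operator of trace zero, hence zero. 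Your trace computation is sound: the extension of \eqref{rn defn} from density operators to all of $\B(\H)$ by linearity is harmless in finite dimensions, and $\nu(X)=1$ is precisely what collapses the right-hand side to $\tr(w)$. What your approach buys is self-containedness and an explicit explanation of \emph{why} the fixed-point set is multiplicatively closed (it is the standard fixed-point argument for unital trace-preserving completely positive maps); what it costs is length, and an essential reliance on the finite trace (``positive with trace zero implies zero''), which the paper's appeal to the multiplicative-domain structure hides inside the citation.
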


\begin{proof} 
By~\cite[Theorem 2.3(8)]{farenick--kozdron2012}, the set of all $y\in\B(\H)$ for which $\QE{\nu}{\psi_y}=y$ is a unital C$^*$-subalgebra
of $\B(\H)$. Hence, $\QE{\nu}{\psi_z}=z$ implies that ${\lvar{\nu}{\psi}}={\rvar{\nu}{\psi}}=0$.
\end{proof}

The following result is a concise spectral characterisation of variance zero in the case of essentially bounded quantum random variables.

\begin{theorem}\label{varzerothm}
The following two statements are equivalent for the quantum random variable $\psi\in L_\H^\infty(X,\nu)$.
\begin{enumerate}
\item $\var{\nu}{\psi}=0$.
\item There exist a unitary  $u:\H\rightarrow\C^d$ and a $\lambda\in \mbox{\rm Spec}^d(\psi)$ such that
$u^*\lambda u=\QE{\nu}{\psi}$.
\end{enumerate}
\end{theorem}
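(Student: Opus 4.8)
The plan is to route both implications through the observation that $\var{\nu}{\psi}=0$ if and only if the restriction of the quantum expectation $\E_\nu$ to the unital C$^*$-subalgebra $\cstar(\psi)\subseteq L_\H^\infty(X,\nu)$ generated by $\psi$ is a unital $*$-homomorphism into $\B(\H)$. To see this, note first that $\E_\nu$ is unital completely positive (Theorem~\ref{varineq}), so the Schwarz inequality forces $\lvar{\nu}{\psi}$ and $\rvar{\nu}{\psi}$ to be positive operators; hence $\var{\nu}{\psi}=\tfrac12\bigl(\lvar{\nu}{\psi}+\rvar{\nu}{\psi}\bigr)=0$ exactly when $\QE{\nu}{\psi^*\psi}=\QE{\nu}{\psi^*}\QE{\nu}{\psi}$ and $\QE{\nu}{\psi\psi^*}=\QE{\nu}{\psi}\QE{\nu}{\psi^*}$. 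These are precisely the equalities placing $\psi$ in the multiplicative domain of $\E_\nu$; since that domain is a unital $*$-subalgebra of $L_\H^\infty(X,\nu)$ on which $\E_\nu$ is multiplicative, it contains $\cstar(\psi)$, so $\E_\nu\vert_{\cstar(\psi)}$ is a unital $*$-homomorphism. The converse is immediate, because $\psi^*\psi$ and $\psi\psi^*$ lie in $\cstar(\psi)$.

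Granting this, the implication (1)$\Rightarrow$(2) is short: fix any unitary $u:\H\to\C^d$, so that $a\mapsto uau^*$ is a $*$-isomorphism $\B(\H)\to M_d(\C)$. Then $a\mapsto u\,\QE{\nu}{a}\,u^*$, restricted to $\cstar(\psi)$, is a composition of unital $*$-homomorphisms, hence a homomorphism $\cstar(\psi)\to M_d(\C)$, and by the definition of the Gelfand spectrum its value at $\psi$, namely $\lambda:=u\,\QE{\nu}{\psi}\,u^*$, belongs to $\spec^d(\psi)$; moreover $u^*\lambda u=\QE{\nu}{\psi}$.

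For the converse (2)$\Rightarrow$(1), suppose $u:\H\to\C^d$ is unitary and $\lambda\in\spec^d(\psi)$ satisfies $u^*\lambda u=\QE{\nu}{\psi}$. Choose a homomorphism $\varrho:\cstar(\psi)\to M_d(\C)$ with $\varrho(\psi)=\lambda$ and put $\sigma(a)=u^*\varrho(a)u$, a unital $*$-homomorphism $\cstar(\psi)\to\B(\H)$ with $\sigma(\psi)=\QE{\nu}{\psi}$. The goal is to show that the ucp map $\E_\nu\vert_{\cstar(\psi)}$ must coincide with $\sigma$, after which the reformulation of the first paragraph gives $\var{\nu}{\psi}=0$. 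I would attack this through a minimal Stinespring representation $\QE{\nu}{a}=W^*\pi(a)W$ of $\E_\nu\vert_{\cstar(\psi)}$ on $\K=\overline{\pi(\cstar(\psi))W\H}$: the map $\E_\nu\vert_{\cstar(\psi)}$ is a $*$-homomorphism precisely when $W\H$ is invariant under $\pi(\cstar(\psi))$, equivalently (by minimality) when $W$ is unitary, i.e.\ when $\dim\K=d$. The real content of the implication is therefore to force $\dim\K\le d$ from the hypothesis $\lambda\in\spec^d(\psi)$. This is exactly the point where one must use more than the always-available containment $\QE{\nu}{\psi}\in\cstarconv(\essran\psi)$ of Corollary~\ref{sarah-cor}: the sharper information that $\QE{\nu}{\psi}$ is a unitary conjugate of a point of $\spec^d(\psi)$, identified in Theorem~\ref{reducingspectrum} with the hypoconvex hull of $\essran\psi$, together with the description of ucp images of $\psi$ in Theorem~\ref{sarah}, is what should pin the dilating representation down to dimension $d$. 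I expect reconciling the concrete averaging map $\E_\nu$ with the rigidity of spectral points of $\psi$ in this way to be the main obstacle of the proof.
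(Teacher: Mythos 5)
Your first paragraph and your proof of (1)$\Rightarrow$(2) are exactly the paper's argument: $\var{\nu}{\psi}=0$ holds precisely when $\psi$ lies in the multiplicative domain of $\E_\nu$, hence precisely when $\E_\nu\vert_{\cstar(\psi)}$ is a unital $*$-homomorphism, and then $\lambda=u\,\QE{\nu}{\psi}\,u^*\in\spec^d(\psi)$ for any unitary $u:\H\rightarrow\C^d$. Where you part company with the paper is in treating (2)$\Rightarrow$(1) as the substantive implication and leaving it open. Be aware that the paper does not close this gap either: its entire converse argument is the one-sentence assertion that if such $\lambda$ and $u$ exist then $\E_\nu\vert_{\cstar(\psi)}$ is a homomorphism, with no explanation of why the existence of \emph{some} homomorphism $\varrho:\cstar(\psi)\rightarrow M_d(\C)$ with $\varrho(\psi)=u\,\QE{\nu}{\psi}\,u^*$ should force the particular ucp map $\E_\nu\vert_{\cstar(\psi)}$ to be multiplicative. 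Two ucp maps can agree at $\psi$ without being equal, and condition (2) only records a value at the single element $\psi$.

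In fact the obstacle you identify cannot be overcome, because (2)$\Rightarrow$(1) is false. Take $d=1$, $X=\{-1,0,1\}$ with $\nu=\mu$ the uniform probability measure, and $\psi(x)=x$. Then $\cstar(\psi)=L^\infty(X,\mu)\cong\C^3$, the homomorphisms to $\C$ are the three point evaluations, so $\spec^1(\psi)=\essran\psi=\{-1,0,1\}$, and unitary conjugation is trivial at $d=1$; thus condition (2) holds with $\lambda=0=\QE{\nu}{\psi}$, yet $\var{\nu}{\psi}=\QE{\nu}{\psi^2}-\QE{\nu}{\psi}^2=2/3\neq 0$. (Tensoring with $1_d$ produces counterexamples for every $d$.) So your suspicion that the hypothesis $\lambda\in\spec^d(\psi)$ does not supply enough rigidity to force $\dim\K\leq d$ in the Stinespring picture is correct --- no such rigidity exists. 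The statement that is actually true, and that your first paragraph already proves, is the equivalence of $\var{\nu}{\psi}=0$ with membership of $\psi$ in the multiplicative domain of $\E_\nu$, i.e.\ with the specific map $a\mapsto u\,\QE{\nu}{a}\,u^*$ being a homomorphism of $\cstar(\psi)$ into $M_d(\C)$; condition (2) as written is strictly weaker.
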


\begin{proof} The condition that $\var{\nu}{\psi}=0$ is equivalent to the two equations 
$\QE{\nu}{\psi^*\psi}=\QE{\nu}{\psi}^*\QE{\nu}{\psi}$ and $\QE{\nu}{\psi\psi^*}=\QE{\nu}{\psi}\QE{\nu}{\psi}^*$ holding 
simultaneously, which in turn is equivalent to $\psi$ belonging to the multiplicative domain of the ucp map
$\mathbb E_\nu$~\cite[Theorem~3.18]{Paulsen-book}. Because the multiplicative domain of $\mathbb E_\nu$ is a unital
C$^*$-subalgebra of $L_\H^\infty(X,\nu)$ and contains $\psi$, the restriction of $\mathbb E_\nu$ to $\cstar(\psi)$
is a homomorphism. Thus, by selecting an orthonormal basis $\{\phi_1,\dots,\phi_d\}$ of $\H$ and in letting
$u:\H\rightarrow\C^d$ be the unitary operator that sends each $\phi_j$ to $e_j$, we have that $\lambda=u^* \QE{\nu}{\psi} u$
is an element of $\mbox{\rm Spec}^d(\psi)$. Conversely, if there exist $\lambda\in \mbox{\rm Spec}^d(\psi)$ and 
a unitary $u:\H\rightarrow\C^d$ such that $u^*\lambda u=\QE{\nu}{\psi}$, then the restriction of $\mathbb E_\nu$ to $\cstar(\psi)$
is a homomorphism and so $\var{\nu}{\psi}=0$.
\end{proof}

\section{The Quantum Moment Problem}

The classical Hamburger moment problem, named after the German mathematician Hans Ludwig Hamburger, is as follows. 
Suppose that $\{g_k\}_{k\in\mathbb N}$ is a sequence of real numbers.  Does there exist a positive Borel measure 
$\mu$ on the real line such that
\begin{equation}\label{solvable}
g_k = \int_{-\infty}^{\infty} x^k \d \mu \; ?
\end{equation}
This problem, as well as many variants of it, has been extensively studied for almost a century.  
If $\{g_k\}_{k\in\mathbb N}$ is given, then we say that $\mu$ is a \define{solution} to the moment problem for  
$\{g_k\}_{k\in\mathbb N}$ if~\eqref{solvable} holds.

A condition for a unique solution to one variant of the moment problem that is very well known 
in classical probability is found in~\cite[Theorem~30.1]{Billingsley}, 
namely if $g=\{g_k\}_{k\in\mathbb N}$ is given and satisfies
\begin{equation}\label{billchar}
P_g(t)=\sum_{k=0}^\infty \frac{g_k t^k}{k!} <\infty
\end{equation} 
for all $t \in \R$, then there is a  unique probability measure $\Pr$ on $\R$ that is the solution to the 
moment problem for  $\{g_k\}_{k\in\mathbb N}$. 
In the case that $\{g_k\}_{k\in\mathbb N}$ is a \define{multiplicative moment sequence},
meaning that $g_k = (g_1)^k$ for all $k$ for some $g_1 \in \R$, the unique solution 
to the moment problem is trivial. That is,~\eqref{billchar} implies the solution to the moment problem is unique since
\(
P_g(t) = e^{g_1 t} <\infty
\)
for all $t \in \R$. If $\Pr$ denotes the probability measure 
on $\R$ supported on $g_1$ and $Y$ is a random variable with  $\Prob{Y = g_1} = 1$,
then $\int_\R Y^k \d \Pr  = (g_1)^k$.
Thus,  
only constant random variables have 
multiplicative moment sequences.

Suppose now that $\{g_k\}_{k\in\mathbb N}$ is a sequence in $\B(\H)$. We say that a
quantum probability measure $\nu$ on the Borel sets of $\B(\H)$ is a solution to the 
quantum moment problem for $\{g_k\}_{k\in\mathbb N}$ if 
there exists a Borel subset $X \subseteq \B(\H)$ and a quantum random variable $\psi: X \to \B(\H)$ such that
$$\QE{\nu}{\psi^k} =\int_X\psi^k \d \nu= g_k$$
for all $k = 0,1,2,\ldots$. 
A natural question to ask is if we can develop an operator-theoretic 
criterion to determine when the quantum moment problem for a multiplicative moment sequence has only a trivial solution.

By Stinespring's dilation theorem for unital completely positive linear maps~\cite[Theorem~4.1]{Paulsen-book}, we deduce that
for every quantum probability measure $\nu$, there exist a
Hilbert space $\K_\nu$, 
an isometry $v:\H\rightarrow\K_\nu$, and
a homomorphism $\Delta_\nu:L_\H^\infty(X,\nu)\rightarrow\B(\K_\nu)$
such that
\begin{enumerate}
\item $\QE{\nu}{\psi}=v^*\Delta_\nu(\psi)v$ for every $\psi\in L_\H^\infty(X,\nu)$, and
\item $\mbox{Span}\,\{\Delta_\nu(\psi)v\xi\,|\,\psi\in L_\H^\infty(X,\nu),\,\xi\in\H\}$ is dense in $\K_\nu$.
\end{enumerate}
The two conditions above determine the triple $(\K_\nu, \Delta_\nu, v)$ up to unitary equivalence~\cite[Proposition~4.2]{Paulsen-book}, and so we
may refer unambiguously to the triple $(\K_\nu, \Delta_\nu, v)$ as the minimal
Stinespring dilation of the quantum expectation $\mathbb E_\nu$. (Here, ``minimal'' is in reference to the second condition, which is to say that
the Hilbert space $\K_\nu$ is no larger than it needs to be.)

A second operator-theoretic concept that we will employ is that of a semi-invariant subspace.
 A subspace $\M$ of a Hilbert space $\K$ is said to be semi-invariant for an operator $z\in\B(\K)$ 
if $\M=\mathcal L_0^\bot\cap\mathcal L_1$ for some $z$-invariant 
subspaces $\mathcal L_0$ and $\mathcal L_1$. 

 \begin{theorem}\label{trivial moments} Let $\nu$ be a quantum probability measure, and let $\psi\in L_\H^\infty(X,\nu)$.
 Assume that $\{g_k\}_{k\in\mathbb N}\subset\B(\H)$ is a sequence of operators with $g_k=\QE{\nu}{\psi^k}$, and let
$(\K_\nu, \Delta_\nu, v)$ be a minimal Stinespring dilation of the quantum expectation $\mathbb E_\nu$.
Then the following two statements are equivalent.
\begin{enumerate}
\item $g_k=(g_1)^k$ for every $k\in\mathbb N$.
\item $v(\H)$ is a semi-invariant subspace for $\Delta_\nu(\psi)$.
\end{enumerate} 
\end{theorem}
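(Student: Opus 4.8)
The plan is to push everything onto the Stinespring dilation space $\K_\nu$, where the multiplicative moment condition becomes a statement about compressions of powers of a single operator, and then to invoke (or reconstruct) a classical lemma of Sarason characterising semi-invariant subspaces. Write $T=\Delta_\nu(\psi)\in\B(\K_\nu)$ and let $P=vv^*\in\B(\K_\nu)$ be the orthogonal projection onto $\M:=v(\H)$; since $v$ is an isometry, $v^*v=1_\H$ and $P=P^*=P^2$. Because $\Delta_\nu$ is a unital $*$-homomorphism we have $\Delta_\nu(\psi^k)=T^k$, so property~(1) of the minimal Stinespring dilation gives $g_k=\QE{\nu}{\psi^k}=v^*T^kv$ for every $k$, with $g_0=1$. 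Conjugating by $v$ and using $v^*v=1$ turns this into $vg_kv^*=PT^kP$ and $v(g_1)^kv^*=(PTP)^k$; since the map $a\mapsto vav^*$ is injective on $\B(\H)$, the statement ``$g_k=(g_1)^k$ for all $k$'' is equivalent to $PT^kP=(PTP)^k$ for all $k\in\mathbb N$.

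Restricting these operators to $\M$ and writing $A:=PT|_\M\in\B(\M)$ — which is the operator corresponding to $g_1$ under the unitary $v$ — the identity $PT^kP=(PTP)^k$ becomes $PT^k|_\M=A^k$; that is, the compression to $\M$ of each power of $T$ equals the corresponding power of the compression of $T$. Thus the theorem reduces to the assertion that $PT^k|_\M=A^k$ for all $k$ if and only if $\M=v(\H)$ is semi-invariant for $T=\Delta_\nu(\psi)$.

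The direction (2)$\Rightarrow$(1) is the routine one: one may normalise the semi-invariant subspace as $\M=\calL_1\ominus\calL_0$ with $\calL_0\subseteq\calL_1$ both $T$-invariant (replace $\calL_0$ by $\calL_0\cap\calL_1$), and then with respect to $\K_\nu=\calL_0\oplus\M\oplus\calL_1^\bot$ the operator $T$ is block upper triangular with $(2,2)$-entry $A$, so $T^k$ is block upper triangular with $(2,2)$-entry $A^k$, i.e. $PT^k|_\M=A^k$. For (1)$\Rightarrow$(2) I would reconstruct Sarason's construction: assuming $PT^k|_\M=A^k$ for all $k\ge0$, set
\[
\calL_0=\bigcap_{k\ge0}\ker\!\big(PT^k\big),\qquad \calL_1=\calL_0\oplus\M .
\]
Then $\calL_0$ is closed and $T$-invariant (if $PT^k\xi=0$ for all $k\ge0$ then $PT^k(T\xi)=PT^{k+1}\xi=0$ for all $k\ge0$), and the $k=0$ condition forces $P\xi=0$, so $\calL_0\perp\M$ and hence $\calL_1\ominus\calL_0=\M$ by construction. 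The remaining point is $T$-invariance of $\calL_1$, and for $\eta\in\M$ it suffices to check $(1-P)T\eta\in\calL_0$, which follows from the telescoping identity
\[
PT^k\big((1-P)T\eta\big)=PT^{k+1}\eta-PT^k\big(PT\eta\big)=A^{k+1}\eta-PT^k(A\eta)=A^{k+1}\eta-A^{k+1}\eta=0\qquad(k\ge0),
\]
where the hypothesis is used for $m=k+1$ applied to $\eta$ and for $m=k$ applied to $A\eta\in\M$; thus $T\eta=A\eta+(1-P)T\eta\in\M\oplus\calL_0=\calL_1$, and invariance of $\calL_0$ then gives invariance of $\calL_1$.

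The only genuinely substantive step is this last implication — manufacturing the two $T$-invariant subspaces out of the mere multiplicativity of the compressions — and it is carried entirely by the displayed telescoping identity; everything else is bookkeeping with the isometry $v$ and the homomorphism property of $\Delta_\nu$. (If one prefers, the reduction in the first two paragraphs can instead be followed by a direct citation of Sarason's semi-invariant subspace lemma, but spelling out the argument keeps the proof self-contained.)
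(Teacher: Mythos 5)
Your reduction is the same as the paper's: both arguments use the homomorphism property of $\Delta_\nu$ to write $g_k=v^*\Delta_\nu(\psi^k)v=v^*\Delta_\nu(\psi)^kv$, so that the multiplicative moment condition becomes exactly the statement that the compression of $T=\Delta_\nu(\psi)$ to $\M=v(\H)$ is power-multiplicative, and both then appeal to Sarason's characterisation of semi-invariant subspaces. The only real difference is that the paper cites Sarason's Lemma~0 outright, whereas you reconstruct its proof. Your reconstruction of the substantive direction (1)$\Rightarrow$(2) --- taking $\calL_0=\bigcap_{k\geq 0}\ker(PT^k)$ and $\calL_1=\calL_0\oplus\M$ and proving invariance of $\calL_1$ via the telescoping identity --- is correct and makes the argument self-contained, which is a modest gain over the paper.

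There is one flaw, in the easy direction (2)$\Rightarrow$(1): the parenthetical normalisation ``replace $\calL_0$ by $\calL_0\cap\calL_1$'' does not in general preserve $\M=\calL_0^\perp\cap\calL_1$. For instance, in $\C^3$ with $\calL_1=\operatorname{span}\{e_1,e_2\}$ and $\calL_0=\operatorname{span}\{e_2+e_3\}$ one has $\M=\operatorname{span}\{e_1\}$, while $\calL_0\cap\calL_1=\{0\}$ and $\calL_1\ominus\{0\}=\calL_1\neq\M$. Worse, under the paper's literal definition of semi-invariance (which does not require $\calL_0\subseteq\calL_1$) the implication (2)$\Rightarrow$(1) is actually false: the operator with $Te_1=e_2$, $Te_2=e_1$, $Te_3=-e_1+e_2+e_3$ leaves both subspaces above invariant, yet $P_\M T^2|_\M=1\neq 0=(P_\M T|_\M)^2$. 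The intended definition --- the standard one used in Sarason's lemma --- requires the nesting $\calL_0\subseteq\calL_1$, under which no normalisation is needed and your block-triangular computation applies verbatim. So you have inherited an imprecision from the paper's definition rather than introduced a gap of substance; delete the parenthetical, assume the nested form from the outset, and the proof is complete.
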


\begin{proof} By a result of Sarason~\cite[Lemma 0]{sarason1965}, a subspace $\M$ of $\K_\nu$ is semi-invariant for $\Delta_\nu(\psi)$
if and only if $p\Delta_\nu(\psi)^k{}_{\vert \M}=\left( p\Delta_\nu(\psi)_{\vert \M}\right)^k$ for every $k\in\mathbb N$,  
where $p\in\B(\K_\nu)$ is the projection with range $\M$. In the case at hand, the dimension of $\M$ is necessarily $d$. Further, any
projection $p\in \B(\K_\nu)$ of rank $d$ can be factored as $p=vv^*$ for some isometry $v:\H\rightarrow\H_\nu$ and, conversely,
for every isometry $v:\H\rightarrow\H_\nu$ the operator $vv^*$ is a projection of rank $d$. Because $\Delta_\nu$ is a homomorphism,
Sarason's criterion is, for an isometry $v:\H\rightarrow\H_\nu$, equivalent to: $v(\H)$ is semi-invariant for $\Delta_\nu(\psi)$
if and only if $v^*\Delta_\nu(\psi^k)v=\left(v^*\Delta_\nu(\psi)v\right)^k$ for every $k\in\mathbb N$. Thus, because $g_k=v^*\Delta_\nu(\psi^k)v$
for all $k\in\mathbb N$, the proof of the theorem is complete.
\end{proof}

In order to state our final result, we recall the following definition~\cite[Definition~5.1]{bunce--salinas1976}.

\begin{defn} Assume that $\A$ is a unital C$^*$-algebra and that $k\in\mathbb N$.
The $k\times k$ \define{matricial spectrum} of $a\in\A$ is the subset $\sigma^k(a)\subset M_k(\C)$ defined by
\[
\sigma^k(a)\,=\,\left\{\varphi(a)\,|\,\varphi:\A\rightarrow M_k(\C)\mbox{ is ucp and }\varphi_{\vert\mathcal R(a)}
\mbox{ is a homomorphism}\right\},
\]
where $\mathcal R(a)$ is the rational Banach subalgebra of $\A$ generated by $a$.
\end{defn}

To be more precise, the algebra $\mathcal R(a)$ is the norm-closure of the abelian algebra of all elements of the form $p(a)q(a)^{-1}$, where
$p$ and $q$ are complex polynomials such that $q$ has no roots in the spectrum $\sigma(a)$ of $a$.

If one considers the classical $d=1$ case, then every point $\lambda$ in the spectrum of $\psi$ gives rise  to a measure $\mu$ for which
$\lambda^k=\int_X\psi^k\d\mu$ for every $k\in\mathbb N$. Indeed, this measure $\mu$ is a point-mass measure corresponding to a point evaluation of 
$\psi$ that yields the complex number $\lambda$ (which is basically the situation described at the end of the second paragraph of this section). 
The matter is simplified somewhat by the fact that $\sigma^1(\psi)=\mbox{Spec}^1(\psi)$
if $\psi$ is a classical random variable. However, in higher dimensions, the matricial
spectrum $\sigma^d(\psi)$ is generally much larger than the Gelfand
spectrum $\mbox{Spec}^d(\psi)$ and, consequently, the quantum moment problem for multiplicative moment sequences entails certain obstructions not seen at the classical level. 
Our final result, Theorem~\ref{voiculescu}, illustrates the obstruction in that it demonstrates 
that a correction by a unitary quantum random variable is necessary prior to integration. The underlying complicating factor is
that an analogue of the Riesz Representation Theorem holds only for a certain subset of ucp maps $L_\H^\infty(X,\nu)$~\cite[Corollary 4.5]{farenick--plosker--smith2011}.

If $(X,\F(X))$ is the Borel space of a compact metric space $X$, and if $\psi:X\rightarrow M_d(\C)$ is
continuous, then below we consider $\psi$ as an element of the unital C$^*$-algebra $C(X)\otimes M_d(\C)$, and the
matricial spectra of $\psi$ are defined relative to this choice of C$^*$-algebra.

\begin{theorem}\label{voiculescu} If $\psi:X\rightarrow M_d(\C)$ is a continuous quantum random variable on a compact metric space $X$, 
and if $\lambda\in\sigma^d(\psi)$, then there exist sequences $\{\nu_n\}_{n\in\mathbb N}$ and $\{w_n\}_{n\in\mathbb N}$ of quantum probability measures and
quantum random variables, respectively, such that
\begin{enumerate}
\item $w_n(x)$ is unitary for all $x\in X$ and every $n\in\mathbb N$, and
\item $\displaystyle\lim_{n\rightarrow\infty}\left\|\lambda^k-\QE{\nu_n}{w_n^*\psi^kw_n}\right\|=0$ for every $k\in\mathbb N$.
\end{enumerate}
\end{theorem}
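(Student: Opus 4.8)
The plan is to extract from the hypothesis $\lambda\in\sigma^d(\psi)$ a unital completely positive witness, realise it on a Stinespring dilation, replace that dilation by a finitely supported approximation, and then recognise the result as a genuine quantum expectation corrected by a unitary quantum random variable. By the definition of the matricial spectrum there is a ucp map $\varphi\colon C(X)\otimes M_d(\C)\to M_d(\C)$ whose restriction to the rational algebra $\mathcal R(\psi)$ is a homomorphism; since $\psi^k\in\mathcal R(\psi)$ for every $k$, this already gives $\varphi(\psi^k)=\varphi(\psi)^k=\lambda^k$. Let $(\K,\pi,V)$ be a minimal Stinespring dilation, so that $V\colon\C^d\to\K$ is an isometry, $\pi\colon C(X)\otimes M_d(\C)\to\B(\K)$ is a $*$-representation, and $\varphi(\cdot)=V^*\pi(\cdot)V$. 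The structural input I would use is that every $*$-representation of $C(X)\otimes M_d(\C)\cong M_d(C(X))$ is the $d$-fold amplification $\pi=\sigma\otimes{\rm id}_{M_d(\C)}$ of a representation $\sigma$ of the abelian algebra $C(X)$; hence $\K\cong\mathcal H_0\otimes\C^d$ and $\sigma(f)=\int_X f\,\dd P$ for a projection-valued Borel measure $P$ on $X$, so that $\pi(\psi)$ is the blockwise integral $\int_X\psi(x)\,\dd P(x)$.

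Next I would discretise. Fix $n$, partition $X$ into finitely many Borel cells $X_1^{(n)},\dots,X_{m_n}^{(n)}$ of diameter less than $1/n$, choose $x_l^{(n)}\in X_l^{(n)}$, and set $A_n=\sum_l \bigl(P(X_l^{(n)})\otimes 1\bigr)\bigl(1\otimes\psi(x_l^{(n)})\bigr)$. Since $\psi$ has only $d^2$ scalar entries, each uniformly continuous on the compact $X$, we get $\|\pi(\psi)-A_n\|\to 0$; and as $A_n$ is block diagonal relative to the orthogonal projections $P(X_l^{(n)})\otimes 1$, we have $A_n^k=\sum_l P(X_l^{(n)})\otimes\psi(x_l^{(n)})^k$, hence $\|\lambda^k-V^*A_n^kV\|\le\|\pi(\psi)^k-A_n^k\|\to 0$ for each fixed $k$. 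Writing $s_l=\bigl(P(X_l^{(n)})\otimes 1\bigr)V\colon\C^d\to\mathcal H_0\otimes\C^d$ and expanding $s_l=\sum_p e_p\otimes r_{l,p}$ against an orthonormal basis $\{e_p\}$ of $\mathcal H_0$ with $r_{l,p}\in M_d(\C)$, a short computation gives
\[
V^*A_n^kV=\sum_{l,p} r_{l,p}^*\,\psi(x_l^{(n)})^k\,r_{l,p},\qquad \sum_{l,p} r_{l,p}^*r_{l,p}=V^*V=1 .
\]
The series of positive matrices on the right converges in norm (we are in $M_d(\C)$), so I would truncate it to a finite index set $F_n$, absorb the defect $1-\sum_{F_n}r_{l,p}^*r_{l,p}\ge 0$ into one extra slack operator placed at an auxiliary point, and then — using the continuity of $\psi$ — move the finitely many operators to distinct points of $X$ lying in the respective cells (refining the partition if needed). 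Bounding the three errors (truncation, slack, relocation) by the operator Cauchy--Schwarz inequality together with the oscillation of $\psi^k$ across a cell, each of which tends to $0$, one obtains distinct points $y_1^{(n)},\dots,y_{N_n}^{(n)}\in X$ and matrices $t_1^{(n)},\dots,t_{N_n}^{(n)}\in M_d(\C)$ with $\sum_j (t_j^{(n)})^*t_j^{(n)}=1$ and $\bigl\|\lambda^k-\sum_j (t_j^{(n)})^*\psi(y_j^{(n)})^k t_j^{(n)}\bigr\|\to 0$ for every fixed $k$.

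The concluding step is the bookkeeping that turns this ``ucp-style'' sum into an honest quantum expectation. In $M_d(\C)$ every $t_j^{(n)}$ admits a polar decomposition $t_j^{(n)}=w_j^{(n)}h_j^{(n)1/2}$ with $w_j^{(n)}$ unitary and $h_j^{(n)}=(t_j^{(n)})^*t_j^{(n)}\ge 0$; because $\sum_j h_j^{(n)}=1$, the rule $\nu_n(\{y_j^{(n)}\})=h_j^{(n)}$ defines a quantum probability measure on $X$ (supported on $\{y_1^{(n)},\dots,y_{N_n}^{(n)}\}$), and the function $w_n$ equal to $w_j^{(n)}$ at $y_j^{(n)}$ and to $1$ elsewhere is a unitary-valued quantum random variable. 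By Example~\ref{quantumaverageexample},
\[
\QE{\nu_n}{w_n^*\psi^kw_n}=\sum_j h_j^{(n)1/2}(w_j^{(n)})^*\psi(y_j^{(n)})^kw_j^{(n)}h_j^{(n)1/2}=\sum_j (t_j^{(n)})^*\psi(y_j^{(n)})^k t_j^{(n)},
\]
so the estimate of the previous step gives $\|\lambda^k-\QE{\nu_n}{w_n^*\psi^kw_n}\|\to 0$ for every $k$, which is the assertion.

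I expect the main obstacle to be exactly the transition carried out in the second and third steps: passing from an abstract ucp witness to a concrete finite expression $\sum_j t_j^*(\cdot)(y_j)t_j$, and then realising that such an expression is a genuine quantum expectation only after conjugation by a unitary. It is precisely the failure of an unrestricted Riesz-type representation for ucp maps on $L_\H^\infty(X,\nu)$ that forces both the limiting procedure and the correction $w_n$, and that prevents one from writing $\lambda^k=\QE{\nu}{\psi^k}$ outright when $\lambda\in\sigma^d(\psi)\setminus{\rm Spec}^d(\psi)$; once this is recognised, the amplification structure of $\pi$ and the error estimates are routine.
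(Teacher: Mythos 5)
Your route is genuinely different from the paper's and, where it works, more elementary. The paper invokes Voiculescu's theorem to replace the minimal Stinespring representation by an approximately unitarily equivalent direct sum of irreducible representations and then uses homogeneity to identify those irreducibles with point evaluations; you instead use the elementary facts that every unital representation of $M_d(C(X))$ is an amplification $\sigma\otimes\mathrm{id}_{M_d(\C)}$ of a representation of $C(X)$ and that $\sigma$ is given by a projection-valued measure, and you then discretise by a Riemann-sum argument over a fine Borel partition, trading a deep approximation theorem for the uniform continuity of $\psi$. The final bookkeeping (truncate, polar-decompose, read off $\nu_n$ and $w_n$ via Example~\ref{quantumaverageexample}) is the same in both arguments, and your closing remarks correctly diagnose why the unitary correction and the limiting procedure are unavoidable.

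There is, however, a genuine gap at the relocation step, and it sits exactly at the crux of the theorem. After truncation, a single cell $X_l^{(n)}$ carries a sum $\sum_p r_{l,p}^*\,\psi(x_l^{(n)})^k\,r_{l,p}$ with possibly several linearly independent Kraus operators attached to the \emph{same} point, whereas a quantum expectation $\QE{\nu_n}{w_n^*\psi^kw_n}$ contributes only one term $h^{1/2}w^*\psi(y)^kwh^{1/2}$ per atom $y$ of $\nu_n$; the two can agree only after the $r_{l,p}$ are spread over distinct points. Your fix requires every such cell to contain enough points, which fails when $X$ has isolated points (in particular when $X$ is finite, the setting of Corollary~\ref{the end}). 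This is not a removable technicality. Take $X$ a single point and $\psi\equiv\operatorname{diag}(0,1)\in M_2(\C)$: the ucp map $z\mapsto z_{11}1_2$ restricts to a homomorphism on $\mathcal R(\psi)$, so $0\in\sigma^2(\psi)$, yet every quantum probability measure on $X$ satisfies $\nu(X)=\nu(\{x_0\})=1$, so $\QE{\nu_n}{w_n^*\psi^kw_n}=w_n(x_0)^*\psi(x_0)^kw_n(x_0)$ has norm $1$ and cannot approach $0$. Thus the argument (and indeed the conclusion) requires either that $X$ be perfect or some additional device for absorbing multiplicity. You should not be unduly troubled by this: the paper's own proof conceals the identical difficulty by writing the Voiculescu direct sum as $\bigoplus_{x\in X_1}f(x)$ over a \emph{subset} $X_1\subseteq X$, i.e., by tacitly assuming the point evaluations occur without multiplicity, which cannot be arranged in general (a multiplicity-two representation of $M_d(\C)$ is not approximately unitarily equivalent to a multiplicity-one representation). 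Apart from this shared obstruction, your steps — the amplification structure, the spectral-measure discretisation, the norm control of the truncation and slack terms in $M_d(\C)$, and the polar-decomposition bookkeeping — are all sound.
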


\begin{proof} By hypothesis there is a ucp map $\vartheta:C(X)\otimes M_d(\C)\rightarrow M_d(\C)$ such that the restriction of
$\vartheta$ to the rational algebra $\mathcal R(\psi)$ is a homomorphism. Let $v^*\Delta v$ be a minimal Stinespring representation
of $\vartheta$.
Because $X$ is a metric space, the C$^*$-algebra $C(X)\otimes M_d(\C)$ is separable; hence, the minimal Stinespring
dilation $\Delta$ of $\vartheta$ takes place on a representing Hilbert space $\H_\Delta$ that is separable. Hence, by Voiculescu's Theorem~\cite[Corollary II.5.9]{Davidson-book}, $\Delta$ is approximately unitarily equivalent to a direct sum $\tilde\Delta$ of a countable family
of irreducible representations of $C(X)\otimes M_d(\C)$. Because  $C(X)\otimes M_d(\C)$ is homogenous,
every irreducible representation of it is a point evaluation. Thus, there are a countable subset $X_1\subseteq X$, 
a separable Hilbert space $\H_{\tilde\Delta}=\displaystyle\bigoplus_{x\in X_1}\C^d_x$, where $\C^d_x=\C^d$ for each $x\in X_1$, and a sequence $\{u_n\}_{n\in\mathbb N}$
of unitary operators $u_n:\H_{\Delta}\rightarrow\H_{\tilde\Delta}$ such that, for every $f\in C(X)\otimes M_d(\C)$,
\[
\tilde\Delta(f)=\bigoplus_{x\in X_1}f(x) \quad\mbox{and}\quad\lim_{n\rightarrow\infty}\|\Delta(f)-u_n^*\tilde\Delta(f)u_n\|=0.
\]
Hence,
\[
\lim_{n\rightarrow\infty}\|\vartheta(f)-(u_nv)^*\tilde\Delta(f)(u_nv)\|=0
\] 
for every $f\in C(X)\otimes M_d(\C)$.

The isometry $u_nv:\C^d\rightarrow\displaystyle\bigoplus_{x\in X_1}\C^d_x$
acts as $u_nv\xi=\displaystyle\bigoplus_{x\in X_1}v_{n,x}\xi$ for $\xi\in\C^d$, for some $v_{n,x}\in M_d(\C)$ and has the property that
$\sum_{x\in X_1}v_{n,x}^*v_{n,x}=1$ for each $n$. By the Polar Decomposition, there is a unitary operator $w_{n,x}\in\B(\C^d_x)=M_d(\C)$ such that $v_{n,x}=w_{n,x}|v_{n,x}|$. 
Now define $w_n:X\rightarrow M_d(\C)$
by $w_n(x)=w_{n,x}$, if $x\in X_1$, and $w_n(x)=1$ if $x\not\in X_1$. Because point sets are closed in a Hausdorff space, each $w_n$ is a measurable function.
By defining $\nu_n:\F(X)\rightarrow M_d(\C)$ by 
\[
\nu_n=\sum_{x\in X_1}\delta_{\{x\}}v_{n,x}^*v_{n,x},
\]
where $\delta_{\{x\}}$ is a classical point-mass measure concentrated at $\{x\}$, we see that $\nu_n$ is a quantum
probability measure such that
\(
(u_nv)^*\tilde\Delta(f)(u_nv)=\QE{\nu_n}{w_n^*fw_n}
\)
for every $f\in C(X)\otimes M_d(\C)$. Thus, using the fact that $\vartheta$ is a homomorphism on $\mathcal R(\psi)$, we have that
\[
\lambda^k=\vartheta(\psi)^k=\vartheta(\psi^k)=\lim_{n\rightarrow\infty}\QE{\nu_n}{w_n^*\psi^kw_n}
\]
for every $k\in\mathbb N$.
\end{proof}
 
In the following special case, one can dispense with the sequences $\{\nu_n\}_n$ and $\{w_n\}_n$, and the quantum moment problem for multiplicative moment sequences is solved exactly rather than asymptotically.

\begin{corollary}\label{the end} If $\psi:X\rightarrow M_d(\C)$ is a quantum random variable on $X=\{x_1,\dots,x_n\}$ and if
$\lambda\in\sigma^d(\psi)$, then there exist a quantum probability measure $\nu$ on $(X,\F(X))$, where $\F(X)$ is the power set of $X$, and a unitary-valued quantum random variable
$w:X\rightarrow M_d(\C)$ such that
$\lambda^k=\QE{\nu}{w^*\psi^kw}$ for every $k\in\mathbb N$.
\end{corollary}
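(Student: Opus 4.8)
The plan is to specialise the proof of Theorem~\ref{voiculescu} to the finite space $X=\{x_1,\dots,x_n\}$ and to observe that in this setting every approximation step becomes an exact identity, so that the sequences $\{\nu_n\}$ and $\{w_n\}$ collapse to a single pair $(\nu,w)$.

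First I would use the hypothesis $\lambda\in\sigma^d(\psi)$ to fix a ucp map $\vartheta:C(X)\otimes M_d(\C)\to M_d(\C)$ with $\vartheta(\psi)=\lambda$ whose restriction to the rational algebra $\mathcal R(\psi)$ is a homomorphism, and then pass to a minimal Stinespring dilation $\vartheta(\cdot)=v^*\Delta(\cdot)v$, with $v:\C^d\to\H_\Delta$ an isometry and $\Delta:C(X)\otimes M_d(\C)\to\B(\H_\Delta)$ a unital $*$-homomorphism. Because $X$ is finite, $C(X)\otimes M_d(\C)\cong\bigoplus_{i=1}^nM_d(\C)$ is a finite-dimensional C$^*$-algebra; hence $\H_\Delta$ is finite-dimensional and $\Delta$ is (unitarily equivalent to) a direct sum of point evaluations $f\mapsto f(x)$, $x\in X$, these being the only irreducible representations of $\bigoplus_iM_d(\C)$. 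This is the finite-dimensional replacement for the appeal to Voiculescu's theorem made in Theorem~\ref{voiculescu}; here the unitary equivalence is exact, so no limit will be needed.

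Assuming for the moment that each point evaluation occurs in $\Delta$ with multiplicity at most one, I may write $\H_\Delta=\bigoplus_{x\in X}\C^d_x$ with $\Delta(f)=\bigoplus_{x}f(x)$, decompose $v\xi=\bigoplus_xv_x\xi$ for operators $v_x\in M_d(\C)$ with $\sum_xv_x^*v_x=1$, and take polar decompositions $v_x=w_x|v_x|$ with $w_x$ unitary (extending the partial-isometry part to a unitary, which is possible as source and target equal $\C^d$). Then I would set $\nu=\sum_{x\in X}\delta_{\{x\}}\,v_x^*v_x$ and $w(x)=w_x$; by construction $\nu$ is a quantum probability measure on $(X,\F(X))$ and $w$ is a unitary-valued quantum random variable, and Example~\ref{quantumaverageexample} gives $\QE{\nu}{w^*fw}=\sum_x|v_x|\,w_x^*f(x)w_x\,|v_x|=\sum_xv_x^*f(x)v_x=v^*\Delta(f)v=\vartheta(f)$ for all $f\in C(X)\otimes M_d(\C)$. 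Taking $f=\psi^k$ and using that $\psi$ lies in $\mathcal R(\psi)$ and that $\vartheta$ is multiplicative there, I obtain $\QE{\nu}{w^*\psi^kw}=\vartheta(\psi^k)=\vartheta(\psi)^k=\lambda^k$ for every $k\in\mathbb N$, which is exactly the claim.

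The step I expect to require the most care---and the one on which the argument really turns---is the multiplicity-one reduction just invoked. For a ucp map that is multiplicative only on the proper subalgebra $\mathcal R(\psi)$, the point evaluation at a given $x$ can in principle occur in the minimal Stinespring dilation with multiplicity greater than one, and then the several blocks of $v$ landing in the $x$-isotypic component cannot naively be amalgamated into a single operator $\nu(\{x\})$ while still reproducing $\lambda^k$. To handle this I would try to use the homomorphism property of $\vartheta$ on $\mathcal R(\psi)$ more forcefully---for instance controlling, via Sarason's criterion as in the proof of Theorem~\ref{trivial moments}, the way $v(\C^d)$ meets each isotypic block---or, failing that, replace the minimal dilation by a suitably enlarged but still finite one on which the reassembly into an operator-valued measure on $X$ is transparent. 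Once the blocks over each $x$ are reduced to a single one, the remaining computations are the routine ones above.
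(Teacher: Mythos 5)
Your overall route is the same as the paper's: the printed proof of Corollary~\ref{the end} consists of the single observation that $C(X)\otimes M_d(\C)$ is finite dimensional, so the minimal Stinespring dilation acts on a finite-dimensional space and the approximate unitary equivalence supplied by Voiculescu's theorem in Theorem~\ref{voiculescu} becomes an exact decomposition of $\Delta$ into point evaluations; the pair $(\nu,w)$ is then built exactly as in your second paragraph, by polar decomposition of the blocks of the Stinespring isometry. So your first two paragraphs reproduce the intended argument.

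The multiplicity-one reduction that you flag in your final paragraph is, however, a genuine gap, and not one you should expect to close. The paper does not address it either: the indexing $\H_{\tilde\Delta}=\bigoplus_{x\in X_1}\C^d_x$ with $X_1\subseteq X$ in the proof of Theorem~\ref{voiculescu} tacitly assumes that each point evaluation occurs at most once in the dilation, and the corollary inherits that assumption. Moreover the obstruction is real rather than technical. Take $X=\{x_1\}$, $d=2$, and $\psi(x_1)=p$ a rank-one projection, so that $\mathcal R(\psi)=\operatorname{span}\{1,p\}$. The ucp map $\vartheta(a)=\tr\bigl((1-p)a\bigr)1$ is multiplicative on $\mathcal R(\psi)$, so $\lambda=\vartheta(\psi)=0$ lies in $\sigma^2(\psi)$ as defined in the paper; its minimal Stinespring dilation is $a\mapsto a\otimes 1_2$, the point evaluation at $x_1$ with multiplicity two, and in fact \emph{every} ucp map sending $p$ to $0$ has at least two linearly independent Kraus operators. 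On a one-point space the only quantum probability measure has $\nu(\{x_1\})=1$, so $\QE{\nu}{w^*\psi w}=w(x_1)^*\,p\,w(x_1)$ is a rank-one projection and can never equal $\lambda=0$. Thus the several Kraus blocks sitting over a single point genuinely cannot be amalgamated into one conjugation $h^{1/2}w^*(\cdot)\,w\,h^{1/2}$ (two completely positive maps $\sum_j v_j^*(\cdot)v_j$ and $b^*(\cdot)b$ coincide only when the $v_j$ span a one-dimensional space), and neither Sarason's criterion nor enlarging the dilation can help while $X$ is required to remain the given finite set. Your instinct that the argument turns on exactly this point is correct; as stated the step cannot be completed, and the same criticism applies to the paper's own proof.
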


\begin{proof}  The C$^*$-algebra $C(X)\otimes M_d(\C)$ has, in this case, finite dimension. Therefore, the minimal Stinesping dilation of every ucp map on
$C(X)\otimes M_d(\C)$ has a representing Hilbert space of finite dimension. Hence, in the proof of Theorem~\ref{voiculescu}, the representations $\Delta$
and $\tilde\Delta$ may be assumed to be equal.
\end{proof}

\section{Measures of Quantum Noise}

Our focus to this point has been with purely mathematical issues. However, 
the probability measures that we have studied herein feature
prominently in the theory of quantum measurement. In this regard, the variance
of a quantum random variable has a particularly crucial role.

To explain briefly the physical context, assume that a $d$-dimensional
Hilbert $\H$ is used to model (the states of) some physical quantum system. The states of the
quantum system are represented by density operators $\rho$ acting on $\H$. The system
will have various physical properties; those properties of the system that can actually be measured 
using some experimental apparatus or device are called observable properties. In the mathematical
formulation of quantum theory, an observable property is represented by a hermitian operator, while an
experimental apparatus is represented by a quantum probability measure $\nu$ on $(X,\F(X))$,
where $X$ is the sample space of possible outcomes of the measurement and $\F(X)$ is
a $\sigma$-algebra of events. Therefore, in practice, $X$ is a finite set and $\F(X)$ is the power set of $X$.
Our assumptions here about $X$ are a little more general: namely, that $X$ is a compact Hausdorff space and
that $\F(X)$ contains the Borel sets of $X$. The statistical element of quantum measurement is realised 
by the following axiom: if, at the moment of the measurement, the system is in state $\rho$, then the
probability that event $E\in\F(X)$ will be measured is $\tr\left(\rho\nu(E)\right)$.

The observable properties of a system associated with a particular quantum measurement $\nu$ will, in general, 
intermingle information about the system with random disturbances coming from the measuring apparatus.
These random disturbances are called quantum noise of $\nu$. (The physics of
quantum noise is treated in \cite{qn}, for example.)
To quantify the amount of quantum noise present in
a quantum mechanical measurement, various numerical measures of quantum noise have been introduced
(see, for example, \cite{bhl2004,polterovich2012,polterovich2014}).
Two forms of 
quantum noise---random noise and inherent noise---have been investigated 
recently by Polterovich \cite{polterovich2014}.
In Polterovich's approach, the ``quantum'' aspect is 
captured by a certain scalar-valued measurable function
(specifically, a Markov kernel), which is integrated with respect to a POVM to 
produce a Hilbert space operator whose norm is used
to determine a numerical indicator of the amount of quantum noise present. 
We outline below how a similar process is carried out using
operator-valued measurable functions; we adopt, as much as possible, the notation of Polterovich. 

Suppose that $\nu$ and $\H$ are fixed, and consider $K(\nu)$, the closed unit ball of $L^\infty_{\H}(X,\nu)$.
The \emph{random quantum noise} of $\nu$ is the quantity $N(\nu)$ defined by
\[
N(\nu)=\sup_{\psi\in K(\nu)}\| \var{\nu}{\psi} \|.
\]
Because quantum expectation is a contractive completely positive map, the operator inequality
\[
\lvar{\nu}{\psi}=   \QE{\nu}{\psi^*\psi}- \QE{\nu}{\psi^*}\QE{\nu}{\psi} \leq \QE{\nu}{\psi^*\psi}
\]
yields the norm inequality
\[
\left\| \lvar{\nu}{\psi} \right\| \leq \left\| \QE{\nu}{\psi^*\psi} \right\|
\leq \| {\psi^*\psi}\| =\|\psi\|^2.
\]
Likewise, $\left\| \rvar{\nu}{\psi} \right\| \leq \|\psi\|^2$ for all $\psi$. Hence,
\[
0\leq \sup_{\psi\in K(\nu)}\| \var{\nu}{\psi} \| \leq 1.
\]

Our definition above of the random quantum noise of $\nu$ differs from that of Polterovich (see \cite[p.~489]{polterovich2014}), although we
have used the same notation. The difference lies in the fact that we are using a larger class of functions $\psi$ in defining $K(\nu)$---that is,
we use operator-valued $\psi$, not just scalar-valued $\psi$.

\begin{proposition}\label{no noise} $N(\nu)=0$ if and only if the mass of $\nu$ is concentrated at a point $x_0$ of $X$.
\end{proposition}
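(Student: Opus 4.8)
The plan is to prove both implications by exploiting the characterisation of variance zero in terms of the multiplicative domain of $\mathbb E_\nu$, together with Example~\ref{quantumaverageexample} on quantum averaging against point masses. For the ``if'' direction, suppose the mass of $\nu$ is concentrated at a single point $x_0$, so that $\nu = \delta_{\{x_0\}}\, \nu(\{x_0\})$ with $\nu(\{x_0\}) = 1 \in \B(\H)$ (quantum probability measures have $\nu(X)=1$). Then for any $\psi \in K(\nu)$, the quantum expectation collapses to $\QE{\nu}{\psi} = \psi(x_0)$ by Example~\ref{quantumaverageexample}, since the single positive operator $h_1 = \nu(\{x_0\}) = 1$ satisfies $h_1^{1/2}\psi(x_0)h_1^{1/2} = \psi(x_0)$. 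Consequently $\QE{\nu}{\psi^*\psi} = \psi(x_0)^*\psi(x_0) = \QE{\nu}{\psi^*}\QE{\nu}{\psi}$ and likewise for $\rvar{\nu}{\psi}$, so $\var{\nu}{\psi} = 0$ for every $\psi$, giving $N(\nu)=0$.

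For the ``only if'' direction I would argue contrapositively: assume the mass of $\nu$ is not concentrated at any single point and produce a $\psi \in K(\nu)$ with $\var{\nu}{\psi} \neq 0$. The key observation is that $N(\nu)=0$ forces $\var{\nu}{\psi}=0$ for \emph{every} $\psi$ in the unit ball, hence for every $\psi \in L^\infty_\H(X,\nu)$ by scaling; by the multiplicative-domain characterisation used in the proof of Theorem~\ref{varzerothm} (via~\cite[Theorem~3.18]{Paulsen-book}), this means the multiplicative domain of $\mathbb E_\nu$ is all of $L^\infty_\H(X,\nu)$, i.e.\ $\mathbb E_\nu$ is itself a homomorphism on $L^\infty_\H(X,\nu)$. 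In particular, restricting to the abelian subalgebra $L^\infty(X,\mu)\otimes 1 \subseteq L^\infty_\H(X,\nu)$, the map $f \mapsto \mathbb E_\nu(f\otimes 1)$ is a unital $*$-homomorphism from $L^\infty(X,\mu)$ into $\B(\H)$. A unital $*$-homomorphism on a unital abelian C$^*$-algebra is automatically multiplicative on idempotents, so for every measurable $E$ the operator $\mathbb E_\nu(\chi_E \otimes 1)$ is a projection; but from~\eqref{rn defn} and Definition~\ref{nuintdefn} one computes $\mathbb E_\nu(\chi_E\otimes 1) = \int_X \chi_E\, \dd\nu = \nu(E)$, so every value $\nu(E)$ is a projection, and $E\mapsto \nu(E)$ being additive with projection values forces $\nu(E)\nu(F)=0$ for disjoint $E,F$. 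Thus $\nu$ is a projection-valued (spectral) measure.

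It then remains to rule out a spectral measure $\nu$ whose mass is genuinely spread over more than one atom, and this is where the main obstacle lies. If $\mu$ is non-atomic, or has at least two atoms, one can split $X = E \sqcup F$ with $\mu(E),\mu(F)>0$, so $\nu(E)$ and $\nu(F)$ are nonzero orthogonal projections summing to $1$; taking $\psi = \chi_E \otimes a + \chi_F \otimes b$ for suitably chosen $a,b\in\B(\H)$ with $\|a\|,\|b\|\le 1$, a direct computation of $\QE{\nu}{\psi^*\psi}-\QE{\nu}{\psi^*}\QE{\nu}{\psi}$ using $\nu(E)\nu(F)=0$ shows the cross terms survive and the left variance is $\nu(E)(a^*a) + \nu(F)(b^*b) - (\nu(E)a+\nu(F)b)^*(\nu(E)a+\nu(F)b)$, which is nonzero for a generic choice (e.g.\ $d=1$, $a=1$, $b=0$ already gives $\mu(E)-\mu(E)^2 > 0$). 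This contradicts $N(\nu)=0$, completing the contrapositive; hence $N(\nu)=0$ implies $\mu$, and therefore $\nu = \delta_{\{x_0\}}$, is concentrated at a single point $x_0$. The one subtlety to handle carefully is the passage ``$\nu$ projection-valued and non-trivially supported $\Rightarrow$ splittable into two nonzero orthogonal pieces,'' which reduces to the elementary fact that a probability measure that is not a point mass admits a set of intermediate measure — straightforward for $\mu$ regular on a compact Hausdorff space, and in any case the hypothesis in the surrounding discussion restricts $X$ to be compact Hausdorff with $\F(X)\supseteq$ Borel sets.
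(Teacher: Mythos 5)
Your ``if'' direction and the first half of your ``only if'' direction coincide with the paper's: $N(\nu)=0$ forces $\var{\nu}{\psi}=0$ for every $\psi$, and the multiplicative-domain argument behind Theorem~\ref{varzerothm} then makes $\mathbb E_\nu$ a unital homomorphism. At that point the paper simply invokes the fact that every unital homomorphism of the homogeneous C$^*$-algebra $C(Z)\otimes M_d(\C)$ onto $M_d(\C)$ is a point evaluation, whereas you restrict to the abelian corner to conclude that $\nu$ is projection-valued and then try to exhibit a contraction $\psi$ of nonzero variance. That is a legitimate and more elementary route, and the regularity caveat you flag at the end is shared by the paper's own proof, which also passes silently from a character of $L^\infty(X,\mu)$ to a point of $X$.

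The gap is in the exhibition step. First, $\QE{\nu}{\chi_E\otimes a}$ is not $\nu(E)a$: by Definition~\ref{nuintdefn} it equals $\int_E (\dd\nu/\dd\mu)^{1/2}\,a\,(\dd\nu/\dd\mu)^{1/2}\d\mu$, and since $\nu(E')\leq\nu(E)=p$ for all $E'\subseteq E$ one gets $w:=(\dd\nu/\dd\mu)^{1/2}=pwp$ a.e.\ on $E$, so this expectation lands in $p\B(\H)p$; your displayed variance formula is therefore incorrect. Second, and more seriously, the ``generic choice'' you offer ($d=1$, $a=1$, $b=0$) is precisely the case your argument has already disposed of: once $\nu(E)$ is known to be a projection, $\psi=\chi_E\otimes 1$ has $\lvar{\nu}{\psi}=\nu(E)-\nu(E)^2=0$, so no scalar-valued witness can work and the $d=1$ check proves nothing about the case you must exclude. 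The repair: with $p=\nu(E)$ and $q=1-p=\nu(F)$ both nonzero, choose a nonzero partial isometry $a=qap$ carrying a subspace of $\ran p$ into $\ran q$, and set $\psi=\chi_E\otimes a$. Then $\QE{\nu}{\psi}=\int_E w\,(qap)\,w\d\mu=0$ because $wq=0$ a.e.\ on $E$, while $\tr\left(\QE{\nu}{\psi^*\psi}\right)=\tr\left(a^*a\,\nu(E)\right)=\tr(a^*a)>0$; hence $\lvar{\nu}{\psi}\neq 0$ and so $\var{\nu}{\psi}\neq 0$, the desired contradiction. (Alternatively, once $\mathbb E_\nu$ is multiplicative, $\nu(E)$ commutes with $\mathbb E_\nu(1\otimes a)$ for every $a\in M_d(\C)$, and $a\mapsto\mathbb E_\nu(1\otimes a)$ is a unital endomorphism, hence an automorphism, of $M_d(\C)$; thus $\nu(E)$ is a scalar projection and $\mu$ is $\{0,1\}$-valued, with no variance computation needed.)
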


\begin{proof} If $N(\nu)=0$, then $\var{\nu}{\psi}=0$
for every $\psi\in K(\nu)$. Hence, by Theorem \ref{varzerothm}, 
the quantum expectation map $\mathbb E_\nu$ is a unital homomorphism of $L^\infty(X,\mu)\otimes M_d(\mathbb C)$ onto
$M_d(\mathbb C)$. If $Z$ is the maximal ideal space of $L^\infty(X,\mu)$, then the unital homomorphisms of the homogenous
C$^*$-algebra $C(Z)\otimes M_d(\mathbb C)$ onto $M_d(\mathbb C)$ are point evaluations $f\mapsto f(x_0)$. Hence, there is
an $x_0\in X$ such that $\nu=\delta_{\{x_0\}}1$.

Conversely, if $\nu$ has its mass concentrated at a point, then $\mathbb E_\nu$ is a unital homomorphism and so
$\var{\nu}{\psi}=0$
for every $\psi\in K(\nu)$.
\end{proof}

Returning to the postulate that quantum probability measures are associated with measurements of quantum systems, 
Proposition \ref{no noise} has the following consequence.

\begin{corollary} Every apparatus that performs measurements of a physical quantum system admits random quantum noise.
\end{corollary}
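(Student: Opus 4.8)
The plan is to derive this directly from Proposition~\ref{no noise}. First I would observe that an experimental apparatus measuring a physical quantum system is, by the conventions set out in this section, represented by a quantum probability measure $\nu$ on $(X,\F(X))$, where $X$ is the (compact Hausdorff) sample space of possible measurement outcomes. The claim ``admits random quantum noise'' means precisely that $N(\nu)>0$, so by Proposition~\ref{no noise} it suffices to show that the mass of $\nu$ is \emph{not} concentrated at a single point $x_0\in X$.

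The key point is a physical one: an apparatus that always returns the same outcome $x_0$ regardless of the state $\rho$ of the system being measured is not performing a measurement of the system at all---its output is independent of $\rho$ and hence conveys no information about the system. Concretely, if $\nu=\delta_{\{x_0\}}1$, then for every state $\rho$ and every event $E\in\F(X)$ the measured probability $\tr(\rho\,\nu(E))$ equals $1$ if $x_0\in E$ and $0$ otherwise; this probability law does not depend on $\rho$, so $\nu$ cannot distinguish any two distinct states. Thus such a degenerate $\nu$ does not model a genuine measurement apparatus. Therefore any $\nu$ that genuinely measures the system fails the point-mass condition of Proposition~\ref{no noise}, giving $N(\nu)\neq 0$, and since $0\le N(\nu)\le 1$ we conclude $N(\nu)>0$, i.e.\ the apparatus admits random quantum noise.

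I expect the only real subtlety to be expository rather than mathematical: the statement is phrased in terms of the informal notion of ``an apparatus that performs measurements of a physical quantum system,'' so the proof must make explicit the (already-stated) modelling assumption that such an apparatus corresponds to a quantum probability measure together with the interpretation of $\tr(\rho\,\nu(E))$ as outcome probabilities, and then note that the point-mass case is excluded precisely because it is state-independent. Once that identification is made, the corollary is an immediate restatement of the contrapositive of Proposition~\ref{no noise}, and no further computation is required.
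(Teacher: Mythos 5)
Your proof is correct and follows essentially the same route as the paper: both argue by contraposition from Proposition~\ref{no noise}, noting that $N(\nu)=0$ would force $\nu=\delta_{\{x_0\}}1$, which yields state-independent outcome probabilities and hence cannot represent a genuine measurement apparatus. Your version merely spells out the physical exclusion of the point-mass case in slightly more detail than the paper's appeal to ``the axioms of quantum mechanics.''
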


\begin{proof} Suppose that $\nu$ is the quantum probability measure associated with the measurement apparatus of some physical quantum system
represented by a finite-dimensional Hilbert space.
If $N(\nu)=0$, then $\nu=\delta_{\{x_0\}}1$ for some $x_0\in X$; in other words, the probability is exactly $1$
that an event $E$ containing outcome $x_0$ is measured, regardless of the state
of the system. However, this contravenes the axioms of quantum mechanics. Hence, it must be that $N(\nu)>0$.
\end{proof} 

A subtler and potentially more descriptive notion of quantum noise is that of \emph{inherent quantum noise}. To discuss inherent quantum noise, we 
first extend the concept of ``smearing or randomisation of a measurement'' 
\cite{qm-book,jencova--pulmannov2009,polterovich2014} to one
which involves quantum random variables rather than classical random variables. 

Assume that $(X,\F(X))$ and $(Y,\F(Y))$ are Borel spaces for compact Hausdorff spaces $X$ and $Y$, and suppose
that $\nu$ and $\nu'$ are quantum probability measures on $(X,\F(X))$ and $(Y,\F(Y)$, respectively, with values in $\B(\H)$ for some
$d$-dimensional Hilbert space $\H$. The measure $\nu$ is said to be a 
\emph{quantum randomisation} of $\nu'$ if there exists a function 
$\gamma$ (sending $y$ to $\gamma_y$) 
of $Y$ into the space of quantum probability measures on $(X,\F(X))$ with values in $\B(\H)$ such that
\begin{enumerate}
\item for every $E\in \F(X)$, the map $f_E^\gamma:Y\rightarrow \B(\H)$ defined by $f_E^\gamma(y)=\gamma_y(E)$, for $y\in Y$, is a
measurable function on $(Y,\F(Y))$, and
\item $\nu(E)=\displaystyle\int_Y f_E^\gamma\, \d\nu'$, for every $E\in \F(X)$.
\end{enumerate}
Furthermore, the linear transformation $\Gamma_{\nu'}:L^\infty_\H(X,\nu)\rightarrow L^\infty_\H(Y,\nu')$ defined by
\[
\Gamma_{\nu'}\psi(y)=\int_X\psi\,\d\gamma_y,
\]
is called a \emph{quantum randomisation operator}.

\begin{proposition} The quantum randomisation operator
$\Gamma_{\nu'}$ is a unital completely positive linear map.
\end{proposition}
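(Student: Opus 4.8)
The plan is to show that $\Gamma_{\nu'}$ factors as a composition of maps each of which is manifestly unital and completely positive, so that the conclusion follows from the fact that a composition of ucp maps is ucp. First I would fix a value $y\in Y$ and recognize that the map $\psi\mapsto\int_X\psi\,\d\gamma_y$ is precisely the quantum expectation $\mathbb E_{\gamma_y}:L^\infty_\H(X,\gamma_y)\to\B(\H)$, which by Theorem~\ref{varineq} is a ucp map. Since $\gamma_y$ and $\nu$ are both quantum probability measures on $(X,\F(X))$ and, by hypothesis, $\nu(E)=\int_Y f_E^\gamma\,\d\nu'$ with each $f_E^\gamma$ measurable, one needs to check that $\mathbb E_{\gamma_y}$ is well-defined on $L^\infty_\H(X,\nu)$ — i.e.\ that $\gamma_y$ is absolutely continuous with respect to $\mu=\frac1d\tr\circ\nu$ for $\nu'$-almost every $y$, or at least that a $\nu$-null set is $\gamma_y$-null — so that $\Gamma_{\nu'}\psi$ depends only on the class of $\psi$ in $L^\infty_\H(X,\nu)$ and not on the representative. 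This measure-theoretic compatibility is where the definition of quantum randomisation is used in an essential way.

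Next I would verify that $\Gamma_{\nu'}\psi$ lands in $L^\infty_\H(Y,\nu')$: boundedness is immediate from $\|\Gamma_{\nu'}\psi(y)\|=\|\mathbb E_{\gamma_y}(\psi)\|\le\|\psi\|$ since each $\mathbb E_{\gamma_y}$ is contractive, and measurability of $y\mapsto\Gamma_{\nu'}\psi(y)$ follows by reducing to coordinate functions via Theorem~\ref{msblqrv} and using condition~(1) of the definition of quantum randomisation together with the definition of the integral $\int_X\psi\,\d\gamma_y$ (which is built coordinatewise from integrals of the scalar functions $(\psi)_\rho$ against $\mu$, each of which varies measurably in $y$ by Fubini-type arguments applied to $f_E^\gamma$). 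Then $\Gamma_{\nu'}$ is linear by linearity of the integral, and $\Gamma_{\nu'}$ is unital because $\mathbb E_{\gamma_y}(\iota)=\gamma_y(X)=1$ for every $y$, so $\Gamma_{\nu'}\iota$ is the constant function $1\in L^\infty_\H(Y,\nu')$.

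For complete positivity, the cleanest route is: for each $n$, the map $\Gamma_{\nu'}\otimes\mathrm{id}_n$ on $L^\infty_\H(X,\nu)\otimes M_n(\C)$ sends $\Psi=[\psi_{k\ell}]$ to the function $y\mapsto[\mathbb E_{\gamma_y}(\psi_{k\ell})]_{k,\ell}=(\mathbb E_{\gamma_y}\otimes\mathrm{id}_n)(\Psi)$; since $\mathbb E_{\gamma_y}$ is completely positive, $(\mathbb E_{\gamma_y}\otimes\mathrm{id}_n)(\Psi)\ge0$ whenever $\Psi\ge0$, and positivity in $L^\infty_\H(Y,\nu')\otimes M_n(\C)$ is checked pointwise $\nu'$-almost everywhere. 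Hence $\Gamma_{\nu'}$ is completely positive. The main obstacle I anticipate is not complete positivity itself — that is a pointwise statement that falls out of Theorem~\ref{varineq} — but rather the well-definedness issue: confirming that $y\mapsto\Gamma_{\nu'}\psi(y)$ is genuinely a measurable, essentially bounded function of $y$ and that it is insensitive to changing $\psi$ on a $\nu$-null set. These points require carefully unwinding Definition~\ref{nuintdefn} of the integral against a quantum probability measure and combining it with both conditions in the definition of quantum randomisation; the absolute-continuity bookkeeping relating $\mu_{\gamma_y}$, $\mu_\nu$, and $\mu_{\nu'}$ is the delicate step.
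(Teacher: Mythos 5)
Your proposal is correct and follows essentially the same route as the paper's proof: unitality from $\gamma_y(X)=1$, and complete positivity by reducing pointwise ($\nu'$-almost everywhere) to the complete positivity of each quantum expectation $\mathbb E_{\gamma_y}$, using that positivity in $M_n\left(L^\infty_\H(Y,\nu')\right)$ is checked pointwise because the algebra is homogeneous. The well-definedness and measurability issues you flag are real but are passed over silently in the paper's proof as well, so your treatment is, if anything, slightly more careful on those points.
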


\begin{proof} The linearity of $\Gamma_{\nu'}$ has already been noted, and it is clear that $\Gamma_{\nu'}(1)=1$ because each POVM $\gamma_y$
satisfies $\gamma_y(X)=1$. To show that $\Gamma_{\nu'}$ is completely positive, let $n\in\mathbb N$ be given and suppose that
$\Psi=[\psi_{ij}]_{i,j=1}^n\in M_n\left( L^\infty_\H(X,\nu)\right)$ is positive. 
Because $L^\infty_\H(X,\nu)$ is a homogenous C$^*$-algebra, 
so is $M_n\left( L^\infty_\H(X,\nu)\right)$. Indeed, 
\[
M_n\left( L^\infty_\H(X,\nu)\right) \simeq C(Z)\otimes M_{nd}(\mathbb C),
\]
where $Z$ is the maximal ideal space of the abelian von Neumann algebra $ L^\infty(X,\mu)$. Thus, to say that the
matrix $\Psi$ is positive in $M_n\left( L^\infty_\H(X,\nu)\right)$ is to say that the operator matrix $\Psi(x)=[\psi_{ij}(x)]_{i,j=1}^n$ 
acting on $\displaystyle\bigoplus_1^n \H$
is a positive operator for $\mu$-almost all $x\in X$. Likewise, 
$\Gamma_{\nu'}^{(n)}(\Psi)=\left[\Gamma_{\nu'}(\psi_{ij})\right]_{i,j=1}^n$ is positive in 
$M_n\left( L^\infty_\H(Y,\nu')\right)$ if $\Gamma_{\nu'}^{(n)}(\Psi)(y)$ is a positive operator matrix for 
$\mu'$-almost all $y\in Y$. Now if $y\in Y$, then
\[
\Gamma_{\nu'}^{(n)}(\Psi)(y)=\left[\Gamma_{\nu'}(\psi_{ij})(y)\right]_{i,j=1}^n
=\Gamma_{\nu'}^{(n)}(\Psi)=\left[\QE{\gamma_y}{\psi_{ij}}\right]_{i,j=1}^n
=\mathbb E_{\gamma_y}^{(n)}[\Psi].
\]
Because $\gamma_y$ is a quantum probability measure, the expectation $\mathbb E_{\gamma_y}$ is completely positive. Thus, 
$\mathbb E_{\gamma_y}^{(n)}[\Psi]=\Gamma_{\nu'}^{(n)}(\Psi)(y)$ is a 
positive operator on $\displaystyle\bigoplus_1^n \H$, which proves 
that $\Gamma_{\nu'}$ is completely positive, and completes the proof.
\end{proof}

Another useful property of the quantum randomisation operator $\Gamma_{\nu'}$ is the following.

\begin{proposition} $\mathbb E_{\nu'}\circ\Gamma_{\nu'}=\mathbb E_\nu$.
\end{proposition}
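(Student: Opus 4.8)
The plan is to verify the operator identity $\mathbb{E}_{\nu'}(\Gamma_{\nu'}\psi)=\mathbb{E}_\nu(\psi)$ by testing both sides against density operators, reducing by linearity and norm-continuity to simple functions, and then carrying out a Fubini-type interchange of integrations whose validity is supplied by the defining relation $\nu(E)=\int_Y f^\gamma_E\,\dd\nu'$ of a quantum randomisation. First, since the density operators span $\B(\H)$, it suffices to show that $\tr\!\big(\rho\,\mathbb{E}_{\nu'}(\Gamma_{\nu'}\psi)\big)=\tr\!\big(\rho\,\mathbb{E}_\nu(\psi)\big)$ for every density operator $\rho$ and every $\psi\in L^\infty_\H(X,\nu)$. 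Both $\mathbb{E}_\nu$ and $\mathbb{E}_{\nu'}\circ\Gamma_{\nu'}$ are linear and, being unital completely positive (Theorem~\ref{varineq} together with the preceding proposition), contractive. Since the simple functions $\sum_k\chi_{E_k}\otimes a_k$ (with $E_k\in\F(X)$ and $a_k\in\B(\H)$) are norm-dense in $L^\infty_\H(X,\nu)\cong L^\infty(X,\mu)\otimes M_d(\C)$, it is enough to establish the identity for $\psi=\chi_E\otimes a$ with $E\in\F(X)$ and $a\in\B(\H)$ fixed.

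Next I would unwind both sides for such a $\psi$. By the pointwise definition of the randomisation operator, $\Gamma_{\nu'}(\chi_E\otimes a)(y)=\int_X\chi_E\otimes a\,\dd\gamma_y=\QE{\gamma_y}{\chi_E\otimes a}$ for each $y\in Y$; in particular, for $a=1$ this is $f^\gamma_E$, so the second condition defining a quantum randomisation gives $\mathbb{E}_{\nu'}(\Gamma_{\nu'}(\chi_E\otimes1))=\mathbb{E}_{\nu'}(f^\gamma_E)=\nu(E)=\mathbb{E}_\nu(\chi_E\otimes1)$, i.e. the two maps already agree on the ``diagonal'' $L^\infty(X,\mu)\otimes1$ with no effort. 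For general $a$, I would expand $\tr\!\big(\rho\,\mathbb{E}_{\nu'}(\Gamma_{\nu'}(\chi_E\otimes a))\big)$ using Definition~\ref{nuintdefn}, first for $\mathbb{E}_{\nu'}$ and then, writing $\sigma_\rho(y)=(\tfrac{\dd\nu'}{\dd\mu'}(y))^{1/2}\rho(\tfrac{\dd\nu'}{\dd\mu'}(y))^{1/2}$, for each $\QE{\gamma_y}{\,\cdot\,}$ (the latter extended from density operators to the positive operator $\sigma_\rho(y)$ by homogeneity, which is legitimate since $\tr\sigma_\rho(y)=\tr(\rho\,\tfrac{\dd\nu'}{\dd\mu'}(y))\geq0$ and $\sigma_\rho(y)=0$ wherever that trace vanishes). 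This turns the left-hand side into the iterated integral $\int_Y\int_E \tr\!\big(\sigma_\rho(y)\,(\tfrac{\dd\gamma_y}{\dd\mu_y}(x))^{1/2}\,a\,(\tfrac{\dd\gamma_y}{\dd\mu_y}(x))^{1/2}\big)\,\dd\mu_y(x)\,\dd\mu'(y)$, where $\mu_y$ is the classical probability measure induced by $\gamma_y$, while the right-hand side is $\tr\!\big(\rho\,\mathbb{E}_\nu(\chi_E\otimes a)\big)=\int_E\tr\!\big(\rho\,(\tfrac{\dd\nu}{\dd\mu}(x))^{1/2}\,a\,(\tfrac{\dd\nu}{\dd\mu}(x))^{1/2}\big)\,\dd\mu(x)$.

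Matching these two expressions is the heart of the proof and the step I expect to be the main obstacle. The idea is that the second defining condition of a quantum randomisation, read through Definition~\ref{nuintdefn} and the Radon--Nikod\'ym identity~\eqref{rn defn} for all $E\in\F(X)$ and all density operators $\rho$, determines the family $\{\mu_y\}$, the derivatives $\{\tfrac{\dd\gamma_y}{\dd\mu_y}\}$, and $\tfrac{\dd\nu'}{\dd\mu'}$ in terms of $\mu$ and $\tfrac{\dd\nu}{\dd\mu}$ precisely so that, after interchanging the order of integration, the two integrands coincide. Concretely, I would pass to the appropriate product measure on $Y\times X$, check the joint measurability of the integrand (using the first defining condition of the randomisation together with the measurability facts behind the preceding proposition), apply the Fubini--Tonelli theorem to the nonnegative parts, and then substitute the operator identity extracted from the second condition. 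The genuinely delicate points are that the square roots $(\tfrac{\dd\gamma_y}{\dd\mu_y}(x))^{1/2}$ flanking $a$ preclude a naive pointwise comparison of ``densities'' and force one to use the second condition in its full operator form rather than merely in its trace, and that one must confirm bona fide integrability so that Fubini applies --- both handled by the boundedness of $\psi$, the $\mu$-integrability of $\tr(\rho\,\tfrac{\dd\nu}{\dd\mu})$ and its $\nu'$-analogue, and the contractivity of the quantum expectations. Once the interchange is justified and the second condition is invoked, the two expressions agree, and hence $\mathbb{E}_{\nu'}(\Gamma_{\nu'}\psi)=\mathbb{E}_\nu(\psi)$ for all $\psi$.
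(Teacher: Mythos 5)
Your reduction to the simple tensors $\chi_E\otimes a$ identifies the correct dense subset, and in this respect you are more careful than the paper itself: the paper verifies the identity only for the characteristic functions $\chi_E$ (your case $a=1$, which, as you observe, is immediate from the defining relation $\nu(E)=\int_Y f_E^\gamma\,\d\nu'$) and then appeals to density of the span of the characteristic functions in $L^\infty_\H(X,\nu)$ --- but that span is only $L^\infty(X,\mu)\otimes\C 1$, which is not dense in $L^\infty(X,\mu)\otimes M_d(\C)$ once $d\geq 2$. The step you yourself flag as ``the main obstacle,'' namely matching the two iterated integrals for general $a$, is a genuine gap, and it cannot be closed: the second defining condition of a quantum randomisation pins down only the operators $\nu(E)$, whereas $\QE{\nu}{\chi_E\otimes a}$ depends on $\nu$ through the square roots of its Radon--Nikod\'ym derivative, and this dependence is not an affine function of the measure. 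Concretely, take $d=2$, $X=\{x_1,x_2\}$, $Y=\{y_1,y_2\}$, $\nu'(\{y_i\})=\tfrac12 1$, and $\gamma_{y_1}(\{x_1\})=p$, $\gamma_{y_2}(\{x_1\})=q$ for noncommuting projections $p,q$. Then $\nu(\{x_1\})=m:=\tfrac12(p+q)$, and for $\psi=\chi_{\{x_1\}}\otimes a$ one computes $\QE{\nu}{\psi}=m^{1/2}am^{1/2}$ while $\QE{\nu'}{\Gamma_{\nu'}\psi}=\tfrac12(pap+qaq)$. With $p=e_1e_1^*$, $q$ the projection onto $\C(e_1+e_2)$, and $a=e_2e_2^*$, the first operator is rank one with range $\C\,m^{1/2}e_2$ and the second is rank one with range $\C(e_1+e_2)$; since $me_2=\tfrac14(e_1+e_2)$, equality of these ranges would force $e_1+e_2$ to be an eigenvector of $m$, which it is not. (The normalised traces of the two sides do agree, consistent with the identity holding at the level of the induced classical measures.)

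So the identity $\mathbb E_{\nu'}\circ\Gamma_{\nu'}=\mathbb E_\nu$ is valid on the abelian part $L^\infty(X,\mu)\otimes\C 1$ --- which is all that your argument, and the paper's, actually establishes --- but it fails on $L^\infty_\H(X,\nu)$ in general. No Fubini interchange will rescue the general case: the failure already occurs for finite $X$ and $Y$, where every integral is a finite sum and all of the measurability and integrability issues you raise are vacuous. Your instinct that ``the square roots flanking $a$ preclude a naive pointwise comparison of densities'' was exactly right; the correct conclusion to draw from it is not that a more elaborate integration argument is needed, but that the proposition as stated requires an additional hypothesis (for instance, commutativity of the relevant densities, or restriction to scalar-valued $\psi$). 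Note that this defect propagates: the proof of Theorem~\ref{iqn1} invokes the proposition for the operator-valued functions $\psi$ and $\psi^*\psi$.
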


\begin{proof} Select $E\in \F(X)$ and consider the quantum random 
variable $\chi_E$ (the characteristic function of $E$). If $y\in Y$, then
\[
\Gamma_{\nu'}\chi_E(y)=\int_X\chi_E\,\d\gamma_y=\gamma_y(E)=f_E^\gamma(y).
\]
Hence,
\[
\QE{\nu'}{\Gamma_{\nu'}\chi_E}=\int_Y f_E^\gamma\,\d\nu'=\nu(E)=\QE{\nu}{\chi_E}.
\]
Because the span of the characteristic functions is norm dense in $L^\infty_H(X,\nu)$, the linearity and continuity of
$\Gamma$ and of the expectations $\mathbb E_{\nu'}$ and $\mathbb E_{\nu}$ yield 
$\QE{\nu'}{\Gamma_{\nu'}\psi}=\QE{\nu}{\psi}$,
for every $\psi\in L^\infty_\H(X,\nu)$.
Thus, 
$\mathbb E_{\nu'}\circ\Gamma_{\nu'}=\mathbb E_\nu$
as required.
\end{proof}

The main result of this section is the following theorem,
which states that to every quantum random variable $\psi$ on $(X,\F(X),\nu)$
there corresponds a quantum random variable $\Gamma_{\nu'}(\psi)$ on $(Y,\F(Y),\nu')$
such that $\Gamma_{\nu'}(\psi)$ and $\psi$ have the same quantum expectation and the
quantum variance of $\Gamma_{\nu'}(\psi)$ is bounded above in the Loewner ordering of $\B(\H)_{\rm sa}$
by the quantum variance of $\psi$.

\begin{theorem}\label{iqn1} $\var{\nu'}{\Gamma_{\nu'}\psi}\leq  \var{\nu}{\psi}$.
\end{theorem}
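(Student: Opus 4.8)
The plan is to reduce the inequality to a single application of the operator Kadison--Schwarz inequality for the unital completely positive map $\Gamma_{\nu'}$. First I would expand both sides using the definitions. Since $\var{\nu}{\psi}=\tfrac12(\lvar{\nu}{\psi}+\rvar{\nu}{\psi})$ and likewise for $\Gamma_{\nu'}\psi$, and since $\QE{\nu'}{\Gamma_{\nu'}\psi}=\QE{\nu}{\psi}$ by the previous proposition (and similarly $\QE{\nu'}{\Gamma_{\nu'}(\psi^*)}=\QE{\nu}{\psi^*}$ because $\Gamma_{\nu'}$ is $*$-preserving and intertwines the expectations), the ``second moment minus square of first moment'' subtractions produce the \emph{same} operators $\QE{\nu}{\psi^*}\QE{\nu}{\psi}$ and $\QE{\nu}{\psi}\QE{\nu}{\psi^*}$ on both sides. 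Hence the claimed inequality $\var{\nu'}{\Gamma_{\nu'}\psi}\leq\var{\nu}{\psi}$ collapses to
\[
\tfrac12\Big(\QE{\nu'}{(\Gamma_{\nu'}\psi)^*(\Gamma_{\nu'}\psi)}+\QE{\nu'}{(\Gamma_{\nu'}\psi)(\Gamma_{\nu'}\psi)^*}\Big)
\;\leq\;
\tfrac12\Big(\QE{\nu}{\psi^*\psi}+\QE{\nu}{\psi\psi^*}\Big).
\]

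Next I would dispose of the right-hand side: again using $\QE{\nu'}\circ\Gamma_{\nu'}=\QE{\nu}$ applied to the elements $\psi^*\psi$ and $\psi\psi^*$ of $L^\infty_\H(X,\nu)$, the right-hand side equals $\tfrac12\big(\QE{\nu'}{\Gamma_{\nu'}(\psi^*\psi)}+\QE{\nu'}{\Gamma_{\nu'}(\psi\psi^*)}\big)$. So the inequality to prove becomes
\[
\QE{\nu'}{(\Gamma_{\nu'}\psi)^*(\Gamma_{\nu'}\psi)}+\QE{\nu'}{(\Gamma_{\nu'}\psi)(\Gamma_{\nu'}\psi)^*}
\;\leq\;
\QE{\nu'}{\Gamma_{\nu'}(\psi^*\psi)}+\QE{\nu'}{\Gamma_{\nu'}(\psi\psi^*)}.
\]
Because $\QE{\nu'}$ is completely positive, hence positive, it suffices to establish the operator inequalities $(\Gamma_{\nu'}\psi)^*(\Gamma_{\nu'}\psi)\leq\Gamma_{\nu'}(\psi^*\psi)$ and $(\Gamma_{\nu'}\psi)(\Gamma_{\nu'}\psi)^*\leq\Gamma_{\nu'}(\psi\psi^*)$ inside the von Neumann algebra $L^\infty_\H(Y,\nu')$. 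The first is exactly the Kadison--Schwarz inequality for the ucp map $\Gamma_{\nu'}$ applied to $\psi$ (valid since $\Gamma_{\nu'}$ is 2-positive, indeed completely positive, by the proposition two results back); the second is the same inequality applied to $\psi^*$. Adding these two inequalities and then applying the positive map $\QE{\nu'}$ gives the displayed inequality, and running the reduction backwards yields the theorem.

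The main thing to get right---the only place where care is needed---is the bookkeeping in the first step: one must verify that the ``$-\QE{}{\psi^*}\QE{}{\psi}$'' terms genuinely match on the two sides, which hinges on $\QE{\nu'}{\Gamma_{\nu'}\psi}=\QE{\nu}{\psi}$ \emph{and} its adjoint version, so that the subtracted cross-terms are literally identical operators rather than merely comparable ones. There is no hard analytic obstacle: both of the auxiliary facts (complete positivity of $\Gamma_{\nu'}$ and $\QE{\nu'}\circ\Gamma_{\nu'}=\QE{\nu}$) have already been proved in the excerpt, and the Kadison--Schwarz inequality for ucp maps on C$^*$-algebras is standard (it is the Schwarz inequality for completely positive maps, e.g.\ \cite[Proposition~3.3]{Paulsen-book}, the same result invoked for the scalar-level Schwarz inequality earlier). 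One should also note the inequality holds in the extended domain of Corollary~\ref{extended var} with the obvious caveat that the operators involved are then only assumed $\nu$-integrable; but for $\psi\in L^\infty_\H(X,\nu)$ everything is bounded and the argument above is complete.
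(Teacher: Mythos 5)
Your proposal is correct and uses exactly the same ingredients as the paper's proof: the Schwarz inequality for the ucp map $\Gamma_{\nu'}$, the intertwining identity $\mathbb E_{\nu'}\circ\Gamma_{\nu'}=\mathbb E_{\nu}$, and positivity of $\mathbb E_{\nu'}$, applied separately to the left and right variances and then averaged. The only difference is cosmetic bookkeeping (you cancel the cross-terms first and then compare second moments, while the paper chains the inequalities in one display), so there is nothing further to add.
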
 

\begin{proof} Consider first the left variance. Given $\psi\in L^\infty_\H(X,\nu)$, we have that
$\Gamma_{\nu'}(\psi^*\psi)\geq\Gamma_{\nu'}(\psi)^*\Gamma_{\nu'}(\psi)$ because every unital completely positive linear map
satisfies the Schwarz inequality. Hence, 
\[
\begin{array}{rcl} 
 \lvar{\nu}{\psi} &=&    \QE{\nu}{\psi^*\psi}-\QE{\nu}{\psi}^*\QE{\nu}{\psi} \\ && \\
 &=&  \QE{\nu'}{\Gamma_{\nu'}(\psi^*\psi)}-\QE{\nu'}{\Gamma_{\nu'}\psi}^*\QE{\nu'}{\Gamma_{\nu'}\psi} \\ && \\
 &\geq &\QE{\nu'}{\Gamma_{\nu'}(\psi)^*\Gamma_{\nu'}(\psi)}-
 \QE{\nu'}{\Gamma_{\nu'}\psi}^*\QE{\nu'}{\Gamma_{\nu'}\psi} \\ && \\
 &=& \lvar{\nu'}{\Gamma_{\nu'}\psi}.
\end{array}
\]
A similar inequality holds for the right variance. 
Therefore, the inequality holds for the average of the left and right variances; hence, 
$\var{\nu'}{\Gamma_{\nu'}\psi}\leq  \var{\nu}{\psi}$.
\end{proof}

The \emph{intrinsic quantum noise} of a quantum probability measure $\nu$ on $(X,\F(X))$ is the 
quantity $N_{\rm in}(\nu)$ defined by
\[
N_{\rm in}(\nu)=\inf_{(\Gamma_{\nu'},\nu')}\sup_{\psi\in K(\nu)}\| \var{\nu'}{\Gamma_{\nu'}\psi} \|.
\]

An immediate consequence of Theorem \ref{iqn1} is a fundamental inequality that extends 
a similar inequality of Polterovich \cite[Proposition 2.2]{polterovich2014}.

\begin{theorem}\label{iqn2} $0\leq N_{\rm in}(\nu)\leq N(\nu)\leq 1$. 
\end{theorem}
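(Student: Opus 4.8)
The plan is to chain together three easy observations, all of which have been set up in the preceding material. First I would recover the upper bound $N(\nu)\le 1$, which is simply the final displayed inequality in the definition of random quantum noise: for every $\psi\in K(\nu)$ the Schwarz inequality for the ucp map $\mathbb E_\nu$ gives $\|\lvar{\nu}{\psi}\|\le\|\psi\|^2\le 1$ and likewise $\|\rvar{\nu}{\psi}\|\le 1$, so $\|\var{\nu}{\psi}\|\le 1$, whence $N(\nu)=\sup_{\psi\in K(\nu)}\|\var{\nu}{\psi}\|\le 1$. This requires nothing new.

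Next I would prove $N_{\rm in}(\nu)\le N(\nu)$. Here the key point is that $\nu$ is a quantum randomisation of \emph{itself}: take $Y=X$, $\nu'=\nu$, and $\gamma_x=\delta_{\{x\}}1$, the point-mass quantum probability measure at $x$. One checks the two defining conditions of a quantum randomisation: the map $f_E^\gamma(x)=\gamma_x(E)=\chi_E(x)1$ is measurable, and $\int_X f_E^\gamma\,\d\nu=\int_X \chi_E\,\d\nu=\nu(E)$. For this choice the randomisation operator is the identity, $\Gamma_{\nu}\psi(x)=\int_X\psi\,\d\gamma_x=\psi(x)$. Consequently the pair $(\Gamma_\nu,\nu)$ is admissible in the infimum defining $N_{\rm in}(\nu)$, and it contributes exactly $\sup_{\psi\in K(\nu)}\|\var{\nu}{\Gamma_\nu\psi}\|=\sup_{\psi\in K(\nu)}\|\var{\nu}{\psi}\|=N(\nu)$. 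Since $N_{\rm in}(\nu)$ is an infimum over a set containing this pair, $N_{\rm in}(\nu)\le N(\nu)$.

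Finally the inequality $0\le N_{\rm in}(\nu)$ is immediate: each $\|\var{\nu'}{\Gamma_{\nu'}\psi}\|$ is nonnegative (indeed $\var{\nu'}{\Gamma_{\nu'}\psi}$ is a positive operator by the Schwarz inequality, as already noted after Definition~\ref{variance}), so every supremum, and hence their infimum, is nonnegative. Stringing the three bounds together yields $0\le N_{\rm in}(\nu)\le N(\nu)\le 1$.

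I do not anticipate a genuine obstacle here; the only thing that needs a moment's care is verifying that the self-randomisation $\gamma_x=\delta_{\{x\}}1$ genuinely satisfies the measurability condition~(1) in the definition of quantum randomisation and that the resulting $\Gamma_\nu$ is the identity map on $L^\infty_\H(X,\nu)$ — but this is exactly the computation carried out in the proof that $\mathbb E_{\nu'}\circ\Gamma_{\nu'}=\mathbb E_\nu$, specialised to characteristic functions and extended by density. Alternatively, one can sidestep the explicit self-randomisation and simply invoke Theorem~\ref{iqn1}: for \emph{any} admissible pair $(\Gamma_{\nu'},\nu')$ one has $\|\var{\nu'}{\Gamma_{\nu'}\psi}\|\le\|\var{\nu}{\psi}\|\le N(\nu)$ for all $\psi\in K(\nu)$ (using that the Loewner order $\var{\nu'}{\Gamma_{\nu'}\psi}\le\var{\nu}{\psi}$ between positive operators implies the norm inequality), so $\sup_{\psi\in K(\nu)}\|\var{\nu'}{\Gamma_{\nu'}\psi}\|\le N(\nu)$, and taking the infimum over pairs gives $N_{\rm in}(\nu)\le N(\nu)$ directly. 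I would present this second route as the main argument since it uses Theorem~\ref{iqn1} as advertised, and mention the self-randomisation only if a witnessing pair is wanted.
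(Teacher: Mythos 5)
Your proposal is correct and follows the paper's intended route: the paper states this theorem without a written proof, presenting it as an immediate consequence of Theorem~\ref{iqn1} together with the bound $\sup_{\psi\in K(\nu)}\|\var{\nu}{\psi}\|\leq 1$ already established in the text, which is precisely your main argument (Loewner order between positive operators implies the operator-norm inequality, then take the supremum and infimum). Your observation that $\nu$ is a quantum randomisation of itself, so that the infimum defining $N_{\rm in}(\nu)$ is taken over a nonempty set, is a small but genuine point of care that the paper leaves implicit.
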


\section*{Acknowledgements}

The work of all three authors is supported, in part, by the Natural Sciences and Engineering Research Council of Canada. 
The second author thanks the Isaac Newton Institute for Mathematical Sciences, Cambridge, for its hospitality during the 
Random Geometry programme in Spring 2015 where part of this paper was completed.


\end{document}